\newcommand{\restrict}{\upharpoonright}
\newcommand{\from}{\leftarrow}
\newcommand{\tto}{\Rightarrow}
\newcommand{\intr}[1]{\llbracket #1 \rrbracket}
\newcommand{\inter}{\mathop{|\!|}}
\newcommand{\unfold}[1]{\widetilde{A}}
\newcommand{\implies}{\Rightarrow}
\newcommand{\vis}{\mathcal{V}}
\newcommand{\church}[1]{\underline{#1}}
\newcommand{\size}{\mathrm{rsize}}
\newcommand{\cosize}{\mathrm{rcosize}}
\newtheorem{lemma}{\textsc{Lemma}}
\newtheorem{theorem}{\textsc{Theorem}}
\newtheorem{definition}{\textsc{Definition}}
\newtheorem{proposition}{\textsc{Proposition}}
\newlength{\viewht}
\newlength{\viewlift}
\newlength{\viewdp}
\newlength{\viewdrop}
\newcommand{\pview}[1]{
\settoheight{\viewht}{\makebox{$#1$}}
\setlength{\viewlift}{\viewht}%
\addtolength{\viewlift}{-1ex}%
\raisebox{0.3\viewlift}{
  \makebox{$\ulcorner$}}
  \!#1\!
\settoheight{\viewht}{\makebox{$#1$}}
\setlength{\viewlift}{\viewht}%
\addtolength{\viewlift}{-1ex}%
\raisebox{0.3\viewlift}{
  \makebox{$\urcorner$}}
}
\newcommand{\oview}[1]{
\settodepth{\viewdp}{\makebox{$#1$}}
\setlength{\viewdrop}{0.3\viewdp}%
\addtolength{\viewdrop}{0.5ex}%
\raisebox{-\viewdrop}{
  \makebox{$\llcorner$}}
  \!#1\!
\settodepth{\viewdp}{\makebox{$#1$}}
\setlength{\viewdrop}{0.3\viewdp}%
\addtolength{\viewdrop}{0.5ex}%
\raisebox{-\viewdrop}{
  \makebox{$\lrcorner$}}
} 
\begin{document}
\title{Estimation of the length of interactions\\ in arena game semantics}
\author{Pierre Clairambault\\
University of Bath\\
\texttt{p.clairambault@bath.ac.uk}}
\date{}
\maketitle
\begin{abstract}
We estimate the maximal length of interactions between strategies in HO/N game semantics, in the spirit of the work by Schwichtenberg and Beckmann for the length of reduction in 
simply typed $\lambda$-calculus.
Because of the operational content of game semantics, the bounds presented here also apply to head linear reduction on $\lambda$-terms and to the execution of
programs by abstract machines (PAM/KAM), including in presence of computational effects such as non-determinism or ground type references. The proof proceeds by extracting from
the games model a combinatorial rewriting rule on trees of natural numbers, which can then be analysed independently of game semantics or $\lambda$-calculus.
\end{abstract}

\section{Introduction}

Among the numerous notions of execution that one can consider on higher-order programming languages (in particular on the $\lambda$-calculus) \emph{head linear reduction} \cite{danos:abstract} plays a particular role. Although
it is not as widespread and specifically studied as, say, $\beta$-reduction, it is nonetheless implicit to various approaches of higher-order computation, such as geometry of interaction, game semantics, optimal
reduction and ordinary operational semantics. It is also implicit to several abstract machines, including the Krivine Abstract Machine (KAM) \cite{krivine1985interpreteur} and the Pointer Abstract Machine (PAM) \cite{danos:abstract}, 
in the sense that it is the reduction they perform \cite{danos:abstract,phd} and as such is a valuable abstraction of how programs are executed in the implementation of higher order languages. 

Despite being closer to the implementation of programming languages, head linear reduction never drew a lot of attention from the community. Part of the reason for that is that it is not a usual notion of reduction:
defining it properly on $\lambda$-terms requires both to extend the notion of redex and to restrict to linear substitution, leading to rather subtle and tricky definitions which lack the canonicity of 
$\beta$-reduction\footnote{However, there are syntaxes on which head linear reductions appear more canonical that $\beta$-reduction, for instance proof nets \cite{DBLP:journals/tcs/MascariP94}.}. 
Moreover, its associated observational equivalence is the same as for the usual head $\beta$-reduction, which makes it non relevant as long as one is
interested in the equational theory of $\lambda$-calculus.
However, head linear reduction should appear in the foreground as soon as one is interested in quantitative aspects of computation, such as complexity. On the contrary, although very precise bounds are known for the possible length of $\beta$-reduction chains in simply typed $\lambda$-calculus \cite{schwichtenberg1982complexity,beckmann2001exact}, to the author's knowledge, the situation for head linear reduction remains essentially
unexplored. Even if it is generally expected that the bounds remain hyper-exponential\footnote{Note however that in \cite{de1987generalizing}, De Bruijn gives an upper bound for his local $\beta$-reduction, akin to head linear
reduction. The bound is an iterate of the diagonal of an Ackermann-like function!}  (and this indeed what we will prove), it does not seem to follow easily from the bounds known for $\beta$-reduction.

Rather than reasoning directly on head linear reduction, we will instead look at it through game semantics \cite{hyland-ong}. Indeed, there is a close relationship between head linear reduction and interaction in games
model of programming languages \cite{danosregnier}. More precisely, given two $\beta$-normal and $\eta$-long $\lambda$-terms $S$ and $T$, there is a step-by-step correspondence between head linear reduction chains of $S T$ and 
game-theoretic interactions between the strategies $\intr{S}$ and $\intr{T}$. Of course, game semantics are not central to our analysis: as is often the case, our methods and results could be adapted to a purely syntactical
framework. However, games have this considerable advantage of accommodating in a single framework purely functional programming languages such as the $\lambda$-calculus or PCF and a number of computational features such as 
non-determinism \cite{DBLP:conf/lics/HarmerM99}, control operators \cite{DBLP:conf/lics/Laird97} and references \cite{abramsky-mccusker:active-algol}. This will allow us to do our study with an increased generality:
our complexity results will hold for a variety of settings, from simply typed $\lambda$-calculus to richer languages possibly featuring the computational effects mentioned above, as long as there is no
fixed point operator.

\paragraph{Outline.} In Section 2 we will recall some of the basic definitions of Hyland-Ong game semantics, define the central notion of size of a strategy, and introduce our main question as the problem of finding 
the maximal length of an interaction between two strategies of fixed size. Our approach will be then to progressively simplify this problem in order to reach its underlying combinatorial nature. In Section 3
we first introduce the notion of \emph{visible pointer structures}, \emph{i.e.} plays where the identity of moves has been forgotten. This allows a more elementary (strategy-free) equivalent statement of our problem.
Then we show how each position in a visible pointer structure can be characterised by a tree of natural numbers called an \emph{agent}. We then show that the problem can be once again reformulated as the maximal length of
a reduction on these agents. In Section 4 we study the length of this reduction, giving in particular an upper bound. We also give a corresponding lower bound, and finally use our result to estimate
the maximal length of head linear reduction sequences on simply typed $\lambda$-terms.

\paragraph{Related works.} Our results and part of our methods are similar to the works of Schwichtenberg and Beckmann \cite{schwichtenberg1982complexity,beckmann2001exact}, but the reduction we study is in some sense more
challenging, because redexes are not
destroyed as they are reduced. Moreover, the game semantics setting allows for an extra generality. The present work also has common points with work by Dal Lago and Laurent \cite{DBLP:conf/csl/LagoL08}, in the sense that it uses
tools from game semantics to reason on the length of execution. However the approach is very different : their estimate is very precise but uses an information on terms difficult to compute (almost as hard as actually performing
execution). Here, we need little information on terms (gathering this information is linear in the size of the term), but our bounds are, in most cases, very rough.

\section{Arena game semantics}
\label{section_games}

We recall briefly the now usual definitions of arena games, first introduced in \cite{hyland-ong}. More detailed accounts
can be found in \cite{fpc2000,harmer2004innocent}. We are interested in games with two participants: Opponent
(O, the \emph{environment}) and Player (P, the \emph{program}).

\subsection{Arenas and Plays}

Valid plays are generated by directed graphs called \emph{arenas}, which are semantic versions of \emph{types}. 
Formally, an \textbf{arena} is a structure $A=(M_A,\lambda_A,I_A,\vdash_{A})$ where:
\begin{itemize}
\item $M_{A}$ is a set of \textbf{moves},
\item $\lambda_{A}:M_{A} \to \{O,P\}$ is a polarity function indicating whether
a move is an Opponent or Player move ($O$-move or $P$-move).
\item $I_A\subseteq \lambda_A^{-1}(\{O\})$ is a set of \textbf{initial moves}.
\item $\vdash_{A} \subset M_{A}\times M_A$ is a relation called \textbf{enabling}, such that
if $m \vdash_A n$, then $\lambda_{A}(m)\neq \lambda_{A}(n)$.
\end{itemize}
In other words, an arena is just a directed bipartite graph. We now define plays as \textbf{justified sequences} over ${A}$: these are
sequences $s$ of moves of ${A}$, each non-initial move $m$ in $s$ being equipped with a pointer to an earlier move
$n$ in $s$, satisfying $n\vdash_{A} m$. In other words, a justified sequence $s$ over ${A}$ is such that
each reversed pointer chain $s_{i_0}\from s_{i_1} \from \dots \from s_{i_n}$ is a path on ${A}$ (viewed as a graph).
The role of pointers is to allow \emph{reopenings} or \emph{backtracking} in plays. When writing justified sequences, we will often omit the
justification information if this does not cause any ambiguity. The symbol $\sqsubseteq$ will denote the prefix ordering on justified
sequences, and $s_1 \sqsubseteq^P s_2$ will mean that $s_1$ is a $P$-ending prefix of $s_2$.
If $s$ is a justified sequence on ${A}$, $|s|$ will denote its length.

Given a justified sequence $s$ on ${A}$, it has two subsequences of particular interest: the P-view and O-view.
The view for P (resp. O) may be understood as the subsequence of the play where P (resp. O) only sees his own duplications.
Practically, the P-view $\pview{s}$ of $s$ is computed by forgetting everything
under Opponent's pointers, in the following recursive way:
\begin{itemize}
\item $\pview{sm} = \pview{s}m$ if $\lambda_{A}(m)=P$;
\item $\pview{sm} = m$ if $m\in I_A$ and $m$ has no justification pointer;
\item $\pview{s_1m s_2n} = \pview{s}mn$ if $\lambda_{A}(n)=O$ and $n$ points to $m$.
\end{itemize}
The O-view $\oview{s}$ of $s$ is defined dually, without the special treatment of initial moves\footnote{In the terminology of \cite{harmer2004innocent}, it is the \emph{long} $O$-view.}. 
The \emph{legal plays} over ${A}$, denoted by $\mathcal{L}_{A}$, are the
justified sequences $s$ on ${A}$ satisfying the \textbf{alternation} condition, \emph{i.e.} that 
if $tmn \sqsubseteq s$, then $\lambda_{A}(m)\neq \lambda_A(n)$.

\subsection{Classes of strategies}

In this subsection, we will present several classes of strategies on arena games that are of interest to us in the present paper. 
A \textbf{strategy} $\sigma$ on ${A}$ is a set of even-length legal plays on ${A}$, closed under even-length prefix. A strategy from $A$ to $B$
is a strategy $\sigma: A\tto B$, where $A\tto B$ is the usual arrow arena defined by $M_{A\tto B} = M_A + M_B$, $\lambda_{A\tto B} = [\overline{\lambda_A}, \lambda_B]$ (where
$\overline{\lambda_A}$ means $\lambda_A$ with polarity $O/P$ reversed), $I_{A\tto B} = I_B$ and $\vdash_{A\tto B} = \vdash_A + \vdash_B + I_B\times I_A$.

\paragraph{Composition.} We define composition of strategies by the usual parallel interaction plus hiding mechanism.
If ${A}$, ${B}$ and ${C}$ are arenas, we define the set of \textbf{interactions}
$I({A},{B},{C})$ as the set of justified sequences $u$ over ${A}$, ${B}$
and ${C}$ such that $u_{\restrict_{{A},{B}}}\in \mathcal{L}_{{A}\tto {B}}$,
$u_{\restrict_{{B},{C}}}\in \mathcal{L}_{{B}\tto {C}}$ and
$u_{\restrict_{{A},{C}}}\in \mathcal{L}_{{A}\tto{C}}$. Then, if $\sigma:{A}\tto {B}$
and $\tau:{B}\tto {C}$, we define parallel interaction as $\sigma \inter \tau = \{u\in I({A},{B},{C}) \mid   u_{\restrict A, B}\in \sigma \wedge u_{\restrict B, C} \in \tau\}$,
composition is then defined as $\sigma;\tau = \{u_{\restrict A, C} \mid  u\in \sigma||\tau\}$. Composition is associative and admits copycat strategies 
as identities.

\paragraph{$P$-visible strategies.} A strategy $\sigma$ is \textbf{$P$-visible} if each of its moves points to the current $P$-view. Formally, for all $sab \in \sigma$,
$b$ points inside $\pview{sa}$. $P$-visible strategies are stable under composition, as is proved for instance in \cite{harmer2004innocent}. They
correspond loosely to functional programs with ground type references \cite{abramsky-mccusker:active-algol}.

\paragraph{Innocent strategies.}
The class of \emph{innocent} strategies is central in game semantics, because of their correspondence with purely functional programs (or 
$\lambda$-terms) and of their useful definability properties. A strategy $\sigma$ is \textbf{innocent} if 
\[
sab\in \sigma \wedge t\in \sigma \wedge ta\in \mathcal{L}_{A} \wedge \pview{sa}=\pview{ta} \implies tab\in \sigma
\]
Intuitively, an innocent strategy only takes its $P$-view into account to determine its next move. Indeed, any innocent strategy is characterized by
a set of $P$-views. This observation is very important since $P$-views can be seen as abstract representations of branches of $\eta$-expanded Böhm trees
(\emph{a.k.a.} Nakajima trees \cite{nakajima1975infinite}) : this is the key to the definability process on innocent strategies \cite{hyland-ong}. It is quite technical to prove that
innocent strategies are stable under composition, proofs can be found for instance in \cite{harmer2004innocent,phd}. Arenas and innocent strategies form a cartesian closed category and
are therefore a model of simply typed $\lambda$-calculus.

\paragraph{Bounded strategies.} A strategy $\sigma$ is \textbf{bounded} if it is $P$-visible and if the length of its $P$-views is bounded: formally, there exists
$N\in \mathbb{N}$ such that for all $s\in \sigma$, $|\pview{s}|\leq N$. Bounded strategies are stable under composition, as is proved in \cite{totality} for the innocent case
and in \cite{phd} for the general case. This result corresponds loosely to the strong normalisation result on simply-typed $\lambda$-calculus.
Syntactically, bounded strategies include the interpretation of all terms of a functional
programming language without a fixed point operator but with \textsc{Algol}-like ground type references (for details about how reference cells get
interpreted as strategies see for instance \cite{abramsky-mccusker:active-algol}, it is obvious that this interpretation yields a bounded strategy) and arbitrary non determinism. This remark
is important since it implies that our results will hold for any program written with these constructs, as long as they do not use recursion or a fixed point
operator. 

\subsection{Size of strategies and interactions}

Since in this paper we will be interested in the length of interactions, it is sensible to make it precise first what we mean by the \emph{size} of strategies. 
Let $\sigma$ be a bounded strategy, its size is defined as
\[
|\sigma| = \frac{max_{s\in \sigma} |\pview{s}|}{2}
\]
All our analysis on the size of interactions will be based on this notion of size of strategies. Our starting point is the following finiteness result, 
proved in \cite{totality}. We say that an interaction $u\in I(A, B, C)$ is \textbf{passive} if the only move by the external Opponent on $A, C$
is the initial move on $C$, so that the interaction stops as soon as we need additional input from the external Opponent. 

\begin{proposition}
Let $\sigma : A\tto B$ and $\tau: B \tto C$ be bounded strategies and let $u\in \sigma || \tau$ be a passive interaction, then $u$ is finite.
\label{finiteness}
\end{proposition}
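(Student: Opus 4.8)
The plan is to show that an infinite passive interaction would contradict the boundedness of $\sigma$ or $\tau$, by controlling the growth of $P$-views as the interaction unfolds.

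First I would analyse the shape of a passive interaction. By definition of $I(A,B,C)$ the three restrictions $u\restrict_{A,B}$, $u\restrict_{B,C}$ and $u\restrict_{A,C}$ are all alternating legal plays; together with passivity (the external Opponent plays only the initial move $c_0$ on $C$), this forces the interaction to proceed as a strictly alternating dialogue carried through the interface arena $B$: each $P$-move of $B$ is produced by $\sigma$ and read by $\tau$ as an $O$-move, each $O$-move of $B$ is produced by $\tau$ and read by $\sigma$, and control passes deterministically between the two strategies, the external Opponent never intervening again. Any move that $\sigma$ or $\tau$ directs into $A$ or $C$ calls for an external answer that never comes, and so merely ends that line of play. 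Hence the interaction carries no genuine branching of control, and finiteness of $u$ reduces to termination of this exchange in $B$.

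Second, and this is the heart of the matter, I would relate the $P$-view of the interaction sequence to the $P$-views of its two projections $s = u\restrict_{A,B}\in\sigma$ and $t = u\restrict_{B,C}\in\tau$. This is the standard interaction-view computation: following the recursive definition of $\pview{\cdot}$ through the three regions, with the polarity of a $B$-move read off from the currently active strategy, the generalised interaction view at any prefix splices together a $P$-view of $\sigma$ and a $P$-view of $\tau$, glued at their shared $B$-moves. Since $\sigma$ and $\tau$ are bounded we have $|\pview{s}|\le 2|\sigma|$ and $|\pview{t}|\le 2|\tau|$ throughout, so the interaction view stays of length at most $2|\sigma| + 2|\tau|$ along every prefix of $u$. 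I would then derive the contradiction: assuming $u$ infinite, the enabling relation of an arena is well-founded, so justification chains are finite and an infinite passive interaction cannot keep descending to fresh moves; it must reopen moves hereditarily justified by a fixed earlier move unboundedly often. The view rules then force that fixed justifier, and everything above it, to persist in the $P$-view of the component that owns it, so that component's $P$-views grow without bound, contradicting boundedness. (This is exactly the mechanism by which opening infinitely many threads under a single initial move blows up a $P$-view.)

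The hard part will be this last step, because our strategies are only assumed $P$-visible, not innocent: a $P$-visible strategy's next move depends on the whole history, not merely on the current $P$-view, so I cannot argue that a repeated view forces a looping future; and since arenas may have infinite branching (for instance a ground-type arena of integer answers) K\"onig's lemma is unavailable. The genuine content is therefore a well-founded measure, extracted from the bounded $P$-view together with the bounded justification depth of the arenas, witnessing that unbounded length of the interaction forces unbounded $P$-view growth in one component. This is precisely the combinatorial kernel that the later sections isolate, by abstracting plays into visible pointer structures and positions into trees of natural numbers on which termination can be analysed independently of the game-semantic scaffolding.
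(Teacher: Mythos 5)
The first thing to note is that the paper does not actually prove Proposition~\ref{finiteness}: it imports it from \cite{totality}, and the machinery of Sections~3--4 (visible pointer structures, agents, the rewriting relation) is a quantitative refinement of the pointer-structure argument given there. So the question is whether your sketch could stand as a self-contained proof, and it cannot: the one hard step is discharged by an incorrect mechanism. You reduce finiteness, correctly, to the claim that no move of $u$ can justify infinitely many later moves (pointer chains in $u$ have bounded length, and a tree of bounded depth in which every node has finite branching is finite). But your justification of that claim --- that reopening a fixed justifier again and again forces the justifier ``and everything above it'' to persist in the owning component's $P$-view, so that this $P$-view grows without bound --- is exactly backwards. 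The view computation \emph{truncates} at a reopening: if a move $o$ played by $\tau$ points to a fixed $\sigma$-move $m$, the paper's rules give $\pview{s_1 m s_2 o} = \pview{s_1}\,m\,o$, so all previous reopenings of $m$ and everything played since are erased from $\sigma$'s view; what ``persists'' is the fixed segment $\pview{s_1}\,m$, whose length does not depend on how many times $m$ has been reopened. Symmetrically on $\tau$'s side. Bounded views are therefore perfectly compatible, move by move, with any number of reopenings --- this is precisely why the theorem is nontrivial. Indeed, if your mechanism were sound, each move could be reopened at most roughly $2|\sigma|+2|\tau|$ times, giving $|u| \leq (2|\sigma|+2|\tau|+1)^{d}$, a single exponential in the depth $d$; this contradicts the paper's own lower bound $N_d(n,p) \geq 2_{d-2}^{2}$, a tower of height $d-2$, so no persistence-style counting of this kind can possibly work.

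Two further points. Your appeal to well-foundedness of the enabling relation is not available here: the paper's arenas are arbitrary bipartite graphs, and the bounded length of pointer chains in $u$ must instead be derived from boundedness of \emph{both} strategies (a chain in a play is visible to both players, whence its length is at most $\min(2n,2p+1)$, as the paper observes in Section~4.2); this part is repairable. What is not repairable within the sketch is its ending: you concede that ``the genuine content is a well-founded measure'' and that this measure ``is precisely the combinatorial kernel that the later sections isolate.'' That concession is accurate, and it means the proposal consists of a correct framing plus a deferral of its only substantive step --- the step actually carried out by the agents of Section~3 and the predicate $\sststile{\rho}{\alpha}$ of Section~4, or by the pointer-structure argument of \cite{totality}. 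As it stands, this is not a proof.
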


Using this, we can actually deduce the existence of an uniform bound on the length of such $u\in \sigma||\tau$, which only depends
on the respective size of $\sigma$ and $\tau$:

\begin{lemma}
For all $n, p\in \mathbb{N}$ there is a lesser $N(n, p)\in \mathbb{N}$ such that for all arenas $A, B$ and $C$, for all $\sigma: A\tto B$ and $\tau: B\tto C$ such that
$|\sigma|\leq p$ and $|\tau|\leq n$, for all passive $u\in \sigma||\tau$ we have $|u|\leq N(n,p)$.
\label{konig}
\end{lemma}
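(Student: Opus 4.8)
The plan is to argue by contradiction using K\"onig's lemma, which is precisely the tool needed to bridge the gap between Proposition~\ref{finiteness} and the statement at hand. Pointwise finiteness forbids any \emph{single} infinite passive interaction, so it suffices to upgrade an unbounded family of finite interactions into one genuinely infinite interaction living in a single pair of bounded strategies. Concretely, suppose no bound $N(n,p)$ exists. Then for every $k\in\mathbb{N}$ we may choose arenas $A_k,B_k,C_k$, strategies $\sigma_k:A_k\tto B_k$ and $\tau_k:B_k\tto C_k$ with $|\sigma_k|\leq p$ and $|\tau_k|\leq n$, and a passive interaction $u_k\in\sigma_k\inter\tau_k$ with $|u_k|\geq k$. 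Observe that the bare fact that the $|u_k|$ are unbounded does \emph{not} contradict Proposition~\ref{finiteness}, which asserts only that each individual interaction is finite; the whole difficulty is that the $u_k$ live over different, arbitrarily large arenas, so there is no a priori single infinite object to appeal to.

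To overcome this, I would first forget the identity of the moves, retaining only their combinatorial \emph{shape}: for each $u_k$ I record, move by move, its polarity ($O$ or $P$ in the composite), the component it belongs to ($A$, $B$ or $C$), and the index of the move it points to. This yields a finite combinatorial object $\bar{u}_k$, and such shapes are closed under $\sqsubseteq$. Let $T$ be the tree of all shapes occurring as prefixes of some $\bar{u}_k$, ordered by $\sqsubseteq$. Then $T$ is infinite, since $|u_k|\to\infty$ forces arbitrarily long shapes to appear, and $T$ is finitely branching, since a one-move extension of a shape of length $\ell$ offers at most $3\cdot 2\cdot(\ell+1)$ choices (three components, two polarities, and at most $\ell+1$ pointer targets). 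By K\"onig's lemma, $T$ has an infinite branch, i.e.\ an infinite shape $\bar{u}$ each of whose finite prefixes occurs among the $\bar{u}_k$. Crucially, every $\bar{u}_k$ inherits from $\sigma_k,\tau_k$ the property that $P$-views on the $A,B$-part and on the $B,C$-part have length at most $2p$ and $2n$ respectively; since $\pview{\cdot}$ is computed locally and this bound is monotone along prefixes, it passes to the K\"onig branch, so $\bar{u}$ enjoys the same bounds.

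It then remains to realize $\bar{u}$ as a genuine infinite passive interaction over a single pair of bounded strategies. I would reconstruct arenas $A,B,C$ directly from the shape, taking the positions of $\bar{u}$ as moves, reading the enabling relation off its pointers, and assigning polarities from the recorded labels; and I would define $\sigma$ (resp.\ $\tau$) as the set of even-length prefixes of $\bar{u}\restrict_{A,B}$ (resp.\ $\bar{u}\restrict_{B,C}$). By construction these are strategies closed under even-length prefix, they are $P$-visible, and by the bound inherited above their $P$-views have length at most $2p$ and $2n$, so $\sigma$ and $\tau$ are bounded with $|\sigma|\leq p$ and $|\tau|\leq n$. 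Every finite prefix of $\bar{u}$ then witnesses a passive interaction in $\sigma\inter\tau$, so $\sigma\inter\tau$ admits an infinite passive interaction, contradicting Proposition~\ref{finiteness}.

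I expect the main obstacle to be exactly this realization step: checking that the abstract limit $\bar{u}$ really does reassemble into well-formed game-semantic data. One must verify that the restrictions $\bar{u}\restrict_{A,B}$ and $\bar{u}\restrict_{B,C}$ are alternating legal plays, that all justification pointers are consistent with a single enabling relation so that $\bar{u}$ is a bona fide element of $I(A,B,C)$, that the parity of restrictions at interaction states places the relevant prefixes in $\sigma$ and $\tau$, and that $P$-visibility and passivity survive the passage to the limit. Each of these holds for every $u_k$ and is a local condition, hence stable under the prefix-directed limit produced by K\"onig's lemma; the care required lies mainly in setting up the shape abstraction so that it records exactly the data needed for the reconstruction, no less, so that the reconstruction is determined, and no more, so that the branching of $T$ stays finite.
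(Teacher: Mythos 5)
Your proof is correct and takes essentially the same route as the paper: forget the identity of moves so that the prefix tree of (abstracted) passive interactions becomes finitely branching, apply K\"onig's lemma, and play the result against Proposition~\ref{finiteness}. The paper phrases it in the contrapositive direction (quotienting moves by their depth and concluding that the finitely branching, well-founded tree is finite), and it leaves implicit the realization step --- turning an infinite branch of the abstracted tree back into a genuine infinite passive interaction between bounded strategies --- which you rightly single out and sketch explicitly.
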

\begin{proof}
For arenas $A, B$ and $C$ consider the set $T_{A, B, C}$ of all passive interactions $u\in I(A, B, C)$ such that for all $s\sqsubseteq u_{\restrict B, C}$,
$|\pview{s}|\leq 2n$ and for all $s\sqsubseteq u_{A, B}$, $|\pview{s}|\leq 2p$. Then, consider the union $T$ of all the $T_{A, B, C}$, our goal
here is to find a bound on the length of all elements of $T$. Consider now the tree structure on $T$ given by the prefix ordering.
To make this tree finitely branching, consider the relation $m \cong n \Leftrightarrow depth(m) = depth(n)$ on moves, where $depth(m)$ is the number of pointers required
to go from $m$ to an initial move. The tree $T/\cong$ is now finitely branching, but is also well-founded by Proposition \ref{finiteness}, therefore it is finite by
König's lemma\footnote{Or, more adequately, the fan theorem.}. Let $N(n,p)$ be its maximal depth, it is now obvious that it satisfies the required properties.
\end{proof}

We have proved the existence of the uniform bound $N(n,p)$, but in a way that provides no feasible means of estimating $N(n,p)$. The goal of the rest of this
paper is to estimate this bound as precisely as possible. As a matter of fact, we will be mainly interested in the ``typed" variant $N_d(n,p)$, defined as the maximum
length of all possible passive interactions between strategies $\sigma: A\tto B$ and $\tau: B \tto C$ of respective size $p$ and $n$, where $B$ has a finite depth $d-1$.

\section{Pointer structures and rewriting}
We have seen that to prove Lemma \ref{konig}, we must consider plays up to an equivalence relation $\cong$ which assimilates all moves at the same depth. Indeed,
general arenas and plays contain information which is useless for us. Following \cite{totality}, we will here reason on
\emph{pointer structures}, which result of considering moves in plays up to $\cong$.  Pointer structures are also similar to the \emph{parity pointer functions} of
Harmer, Hyland and Melliès \cite{hhm} and to the \emph{interaction sequences} of Coquand \cite{coquand}. We will delve here into their combinatorics and extract from them a small rewriting system,
whose study is sufficient to characterize their length.

\subsection{$N_d(n,p)$ as a bound for pointer structures}

\paragraph{Visible pointer structures.}
In \cite{totality}, we introduced pointer structures by elementary axioms, independent of the general notions of game semantics. Instead here, we define \textbf{pointer
structures} as usual alternating plays, but on the particular ``pure" arena $I_\omega = \bigsqcup_{n\in \mathbb{N}} I_n$, where $I_0 = \bot$ ($\bot$ is the singleton
arena with just one Opponent move) and $I_{n+1} = I_n \tto \bot$. 
As we are interested in the interaction between $P$-visible strategies, we will only consider \textbf{visible} pointer structures,
where both players point in their corresponding view. Formally, $s$ is visible if for all $s' p\sqsubseteq^P s$, 
$a$ points inside $\pview{s'}$ and if for all $s' o \sqsubseteq^O s$, $o$ points inside $\oview{s'}$. The \textbf{depth} of a visible pointer structure $s$ is the smallest $d$
such that $s$ is a play on $I_d$. Let us denote by $\vis$ the set of all visible pointer structures.

\paragraph{Atomic agents.}
After forgetting information on plays, let us forget information on strategies. Instead of considering bounded strategies with all their intentional behaviour,
we will just keep the data of their size. Pointer structures will then be considered as interactions between the corresponding numbers which will be called
\textbf{atomic agents}. If $n$ is such a natural number, we define its \textbf{trace} as follows, along with the dual notion of \textbf{co-trace}:
\begin{eqnarray*}
Tr(n) &=& \{s \in \vis \mid \forall s'\sqsubseteq s, |\pview{s'}|\leq 2n\}\\
coTr(p) &=& \{s\in \vis \mid \forall s'\sqsubseteq s, |\oview{s'}| \leq 2p+1\}
\end{eqnarray*}
An \textbf{interaction} at depth $d$ between $n$ and $p$ is a visible pointer structure $s$ of depth at most $d$ such that $s\in Tr(n)\cap coTr(p)$.
We write $s\in n \star_d p$. These definitions allow to give the following strategy-free equivalent formulation of $N_d(n,p)$.

\begin{lemma}
Let $n$ and $p$ be natural numbers and $d\geq 2$, then
\[
N_d(n,p) = max \{ |s| \mid s\in n \star_d p\}
\]
\end{lemma}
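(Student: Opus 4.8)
The plan is to establish the equality by showing the two quantities bound each other, via a correspondence between passive interactions of bounded strategies and visible pointer structures in $n \star_d p$. The key idea is that the equivalence relation $\cong$ used in the proof of Lemma \ref{konig} —assimilating moves at the same depth— is precisely the passage from genuine plays on arenas to pointer structures. So I would first make this correspondence precise: given arenas $A, B, C$ with $B$ of depth $d-1$, the arrow arena $A \tto B \tto C$ has depth $d$, and quotienting its interactions by $\cong$ yields visible pointer structures on $I_d$. The constraints defining $T_{A,B,C}$ in the proof of Lemma \ref{konig} —namely $|\pview{s}| \leq 2n$ on the $\tau$ side and $|\pview{s}| \leq 2p$ on the $\sigma$ side— should translate, after taking the relevant views in the pointer-structure picture, exactly into the membership conditions $s \in Tr(n)$ and $s \in coTr(p)$. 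The passivity condition matches on both sides since in each case the external Opponent only plays the initial move on $C$.

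For the inequality $N_d(n,p) \leq \max\{|s| \mid s \in n \star_d p\}$, I would take a passive interaction $u \in \sigma \inter \tau$ with $|\sigma| \leq p$, $|\tau| \leq n$ realizing (or approaching) the maximum length, and map it to its image $\bar u$ under $\cong$. Since $\cong$ preserves length and the view/pointer structure, $\bar u$ is a visible pointer structure of depth at most $d$ lying in $Tr(n) \cap coTr(p)$, hence $\bar u \in n \star_d p$ with $|\bar u| = |u|$. The slightly delicate point here is checking that $\bar u$ is genuinely \emph{visible} in the sense defined for pointer structures: this follows because $\sigma$ and $\tau$ are $P$-visible, so both players point inside their respective views, and this visibility is preserved by the quotient. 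One must also verify that the size bounds on $\sigma, \tau$ give the correct trace/co-trace bounds; the asymmetry between $2n$ and $2p+1$ in the definitions of $Tr$ and $coTr$ reflects the fact that in $A \tto B \tto C$ the roles of $P$ and $O$ are swapped between the two sides, and one O-view on the outer arena corresponds to a P-view shifted by the single external initial move, which I expect accounts for the extra $+1$.

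For the converse inequality, the plan is to realize any visible pointer structure $s \in n \star_d p$ as an actual interaction between bounded strategies of the right sizes. I would build the ``pure'' arena realization directly: take $B = I_{d-1}$ (which has depth $d-1$) and suitable pure arenas for $A$ and $C$ so that $A \tto B \tto C$ has depth $d$ and recovers $I_d$ up to $\cong$. Then I would define $\sigma$ and $\tau$ to be the strategies whose plays are exactly the $P$-projections of (prefixes of) $s$; the trace condition $s \in Tr(n)$ ensures the $P$-views on the $\tau$ side have length at most $2n$, so $|\tau| \leq n$, and dually $s \in coTr(p)$ ensures $|\sigma| \leq p$. These strategies are $P$-visible and bounded by construction, and $s$ itself (lifted back through $\cong$) is a passive interaction in $\sigma \inter \tau$ of the same length, giving $|s| \leq N_d(n,p)$.

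The main obstacle I anticipate is the converse direction's definability step: constructing bona fide strategies $\sigma, \tau$ from a single pointer structure requires that the $P$-projection of $s$ actually forms a deterministic, even-length-prefix-closed strategy —or more realistically that $s$ can be completed to a consistent pair of strategies— and that we may freely choose the underlying arenas to be pure. This is essentially a definability/realizability argument of the kind underlying Hyland–Ong definability, and the care lies in handling the justification pointers correctly when lifting from the quotient $I_d$ back to genuine arenas, and in confirming that the length bounds on views transfer without an off-by-one discrepancy. Once the correspondence is set up cleanly, both inequalities should follow by matching the defining conditions term by term, and the hypothesis $d \geq 2$ guarantees there is room for the nontrivial interaction on the middle arena $B$.
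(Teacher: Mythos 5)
Your route is the same as the paper's: one inequality by quotienting an arbitrary passive interaction by $\cong$ and checking that visibility and the view bounds survive, the other by realizing a pointer structure $s \in n \star_d p$ as a passive interaction between bounded strategies on pure arenas ($A$ empty, $B = I_{d-1}$, $C = \bot$). The place where you diverge --- and which you flag as the ``main obstacle'' --- is in fact a non-issue, and it is worth seeing why. In this paper a strategy is nothing more than a set of even-length legal plays closed under even-length prefix, and the strategies quantified over in the definition of $N_d(n,p)$ are only required to be $P$-visible with bounded $P$-views: no determinism, no innocence. Consequently your proposed construction already works as stated: the even-length prefix-closure of the projections of the single play $s$ is a bona fide strategy (trivially deterministic, in fact, being the set of prefixes of one play), it is $P$-visible because $s$ is visible, and the size bounds $|\tau| \leq n$, $|\sigma| \leq p$ follow from $s \in Tr(n) \cap coTr(p)$ exactly as you say. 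Nothing of the Hyland--Ong definability kind is needed, and no ``completion to a consistent pair of strategies'' is required. The paper sidesteps even this small construction with a slicker device: it takes the \emph{maximal} bounded strategies $\mathbf{n} = \{t \in I_d \mid \forall t' \sqsubseteq t,\ |\pview{t'}| \leq 2n\}$ and $\mathbf{p} = \{t \in I_{d-1} \mid \forall t' \sqsubseteq t,\ |\pview{t'}| \leq 2p\}$, which are strategies of the right sizes by fiat and contain every candidate at once, so that pointer structures in $n \star_d p$ \emph{are} passive interactions in $\mathbf{p} \inter \mathbf{n}$ with no per-play construction at all.

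One further caution: had your worry been real --- that is, had the realizing strategies been required to be innocent or otherwise term-definable --- the statement itself, not just the proof, would be in jeopardy. The paper notes (footnote in Section 4.2) that non-innocent behaviour permits significantly longer plays than innocent behaviour, so restricting to innocent strategies would break the equality, not merely complicate its verification. The liberality of the notion of strategy is thus not a convenience but the very reason the lemma holds as an equality; invoking a Hyland--Ong-style definability argument here would be both unnecessary and, for the equality, unattainable.
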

\begin{proof}
Consider the maximal bounded strategies of respective size $n$ and $p$, defined as $\mathbf{n} = \{s\in I_{d}\mid \forall s'\sqsubseteq s,~|\pview{s'}|\leq 2n$ 
and $\mathbf{p} = \{s\in I_{d-1} \mid \forall s'\sqsubseteq s,~|\pview{s'}|\leq 2p$. Then pointer structures in $n \star_d p$ are the same as (passive) interactions
in $\mathbf{p}||\mathbf{n}$, thus $max \{ |s| \mid s\in n \star_d p\} \leq N_d(n,p)$. Reciprocally, if $\sigma: A\tto B$ has size $p$ and $\tau: B\tto C$ has size
$n$ and if $u\in \sigma||\tau$ is passive, then if $u'$ denotes $u$ where moves are considered up to $\cong$ we have $u' \in \mathbf{p}||\mathbf{n}$
thus $u' \in n \star_d p$ and $N_d(n,p) = max \{ |s| \mid s\in n \star_d p\}$.
\end{proof}

\subsection{Agents}

To bound the length of a pointer structure $s$, our idea is to label each of its moves $s_i$ by an object $t$, expressing the size that the strategies have left. Let us consider
here an analogy between pointer structures and the execution of $\lambda$-terms by the KAM\footnote{The syntax used here seems natural enough, but is for instance described in \cite{phd}.}.
Consider the following three KAM computation steps:
\begin{eqnarray*}
(\lambda x.x S) \star T \cdot \pi_0     &\leadsto^3& T \star S^{x\mapsto T} \cdot \pi_0
\end{eqnarray*}
The interaction between two closed terms (with empty environment) leads, after three steps of computation, to the interaction between two \emph{open terms}
$T$ and $S$ (where $x$ is free in $S$), with an environment. By analogy, if $s_0$ is labelled by the pair $(n,p)$ of interacting ``strategies", each move $s_i$
should correspond to an interaction between objects $(a, b)$, where $a$ and $b$ have a tree-like structure which is reminiscent of those of closures\footnote{As in the example above, closures
are pairs $M^\sigma$ where $M$ is an open term and $\sigma$ is an \emph{environment}, \emph{i.e.} a mapping which to each free variable of $M$ associates a closure.}.

We will call a \textbf{pointed visible pointer structure} (pvps) a pair $(s, i)$ where $s$ is a visible pointer structure and $i\leq |s|-1$
is an arbitrary ``starting" move. We adapt the notions of size and depth for pvps, and introduce a notion of \emph{context}.

\begin{definition}
Let $(s,i)$ be a pointed visible pointer structure. The \textbf{residual size} of $s$ at $i$, written $\size(s, i)$, is defined as follows:
\begin{itemize}
\item If $s_i$ is an Opponent move, it is $\max_{s_i \in \pview{s_{\leq j}}} |\pview{s_{\leq j}}| - |\pview{s_{\leq i}}| + 1$
\item If $s_i$ is a Player move, it is $\max_{s_i \in \oview{s_{\leq j}}} |\oview{s_{\leq j}}| - |\oview{s_{\leq i}}| + 1$
\end{itemize}
where $s_i \in \pview{s_{\leq j}}$ means that the computation of $\pview{s_{\leq j}}$ reaches\footnote{So starting from $s_j$ and following Opponent's pointers
eventually reaches $s_i$.} $s_i$. Dually, we have the notion of
\textbf{residual co-size} of $s$ at $i$, written $\cosize(s, i)$, defined as follows:
\begin{itemize}
\item If $s_i$ is an Opponent move, it is $\max_{s_i \in \oview{s_{\leq j}}} |\oview{s_{\leq j}}| - |\oview{s_{\leq i}}| + 1$
\item Otherwise, $\max_{s_i \in \pview{s_{\leq j}}} |\pview{s_{\leq j}}| - |\pview{s_{\leq i}}| + 1$
\end{itemize}
The \emph{residual depth} of $s$ at $i$ is the maximal length of a pointer chain in $s$ starting from $s_i$.
\end{definition}

\begin{definition}
Let $s$ be a visible pointer structure. We define the \textbf{context} of $(s,i)$ as:
\begin{itemize}
\item If $s_i$ is an O-move, the set $\{s_{n_1}, \dots, s_{n_p}\}$ of O-moves appearing in $\pview{s_{< i}}$,
\item If $s_i$ is a P-move, the set $\{s_{n_1}, \dots, s_{n_p}\}$ of P-moves appearing in $\oview{s_{< i}}$.
\end{itemize}
In other words it is the set of moves to which $s_{i+1}$ can point whilst abiding to the visibility condition, except $s_i$. We also need the dual notion
of co-context, which contains the moves the other player can point to. The \textbf{co-context} of $(s, i)$ is:
\begin{itemize}
\item If $s_i$ is an O-move, the set $\{s_{n_1},\dots, s_{n_p}\}$ of P-moves appearing in $\oview{s_{< i}}$,
\item If $s_i$ is a P-move, the set $\{s_{n_1},\dots, s_{n_p}\}$ of O-moves appearing in $\pview{s_{< i}}$.
\end{itemize}
\end{definition}

\begin{definition}
A \textbf{general agent} (just called agent for short) is a finite tree, whose nodes and edges are both labelled by natural numbers. If $a_1, \dots, a_p$ are agents and $d_1, \dots, d_p$ are natural numbers, we write:
\[
n[\{d_1\} a_1, \dots, \{d_p\} a_p] = 
\raisebox{20pt}{\xymatrix{
&n	\ar@{-}[dl]_{d_1}
	\ar@{-}[dr]^{d_p}\\
a_1&\dots&a_p
}}
\]
\end{definition}

\begin{definition}[Trace, co-trace, interaction]
Let us generalize the notion of trace to general agents. The two notions $Tr$ and $coTr$ are defined by mutual recursion, as follows:
let $a = n[\{d_1\}a_1, \dots, \{d_p\}a_p]$ be an agent. We say that $(s,i)$ is a \textbf{trace} (resp. a \textbf{co-trace}) of $a$, denoted $(s,i)\in Tr(a)$ (resp.
$(s, i)\in coTr(a)$) if the following conditions are satisfied:
\begin{itemize}
\item $\size(s, i) \leq 2n$ (resp. $\cosize(s, i) \leq 2n+1$),
\item If $\{s_{n_1}, \dots, s_{n_p}\}$ is the context of $(s,i)$ (resp. co-context), then for each $k\in \{1, \dots, p\}$ we have $(s,n_k) \in coTr(a_k)$.
\item If $\{s_{n_1}, \dots, s_{n_p}\}$ is the context of $(s,i)$ (resp. co-context), then for each $k\in \{1, \dots, p\}$ the residual depth of $s$ at $n_k$ is less than $d_k$.
\end{itemize}
Then, we define an \textbf{interaction} of two agents $a$ and $b$ at depth $d$ as a pair $(s, i)\in Tr(a)\cap coTr(b)$ where the residual depth of $s$ at $i$ is less than $d$, which we write $(s, i)\in a\star_d b$.
\end{definition}
Notice that we use the same notations $Tr$, $coTr$ and $\star$ both for natural numbers and general agents. This should not generate any confusion, since the definitions
just above coincide with the previous ones in the special case of ``atomic", or closed, agents: if $n$ and $p$ are natural numbers, then obviously $s\in n \star_d p$ if and
only if $(s, 0)\in n[] \star_d p[]$.
Note also that definitions are adapted here to this particular setting where strategies are replaced by natural numbers, however they could be generalized to the usual notion
of strategies. An agent would be then a tree of strategies, and a trace of this agent would be a possible interaction between all these strategies. This would
be a new approach to the problem of \emph{revealed} or \emph{uncovered} game semantics \cite{greenland2004game,blum2008thesis}, where strategies are not necessarily cut-free.

\subsection{Simulation of visible pointer structures}

We introduce now the main tool of this paper, a reduction on agents which ``simulates" visible pointer structures: if $n[\{d_1\}a_1, \dots, \{d_p\}a_p]$ and $b$ are agents ($n>0$),
we define the non-deterministic reduction relation $\leadsto$ on triples $(a, d, b)$, where $d$ is a depth (a natural number) and $a$ and $b$ are agents, by the following two cases:
\begin{eqnarray*}
(n[\{d_1\}a_1, \dots, \{d_p\}a_p],d, b) &\leadsto& (a_i,d_i-1,(n-1)[\{d_1\}a_1, \dots, \{d_p\}a_p, \{d\}b])\\
(n[\{d_1\}a_1, \dots, \{d_p\}a_p],d, b) &\leadsto& (b, d-1, (n-1)[\{d_1\}a_1, \dots, \{d_p\}a_p, \{d\}b])
\end{eqnarray*}
where $i\in \{1, \dots , p\}$, $d_i>0$ in the first case and $d>0$ in the second case. We can now state the following central proposition.

\begin{proposition}[Simulation]
Let $(s,i)\in a\star_d b$, then if $s_{i+1}$ is defined, there exists $(a,d,b)\leadsto (a',d',b')$ such that $(s, i+1)\in a'\star_{d'} b'$.
\end{proposition}
\begin{proof}
The proof proceeds by a close analysis of where in its $P$-view (resp. $O$-view) $s_{i+1}$ can point. If it points to $s_i$, then the active strategy
asks for its argument which corresponds to the second reduction case. If it points to some element $s_{n_i}$ of its context, the active strategy calls the $i$-th
element of its context: this is the first reduction case, putting the subtree $a_i$ in head position. The rest of the proof consists in technical verifications,
to check that the new triple $(a', d', b')$ is such that $(s, i+1)\in a'\star_{d'} b'$.
\end{proof}

The result above will be sufficient for our purpose. Let us mention in passing that the connection between visible pointer structures
and agents is in fact tighter: a reduction chain starting from a triple $(n[], d, p[])$ can also be canonically mapped to a pointed visible pointer structure in $n\star_d p$,
and the two translations are inverse of one another. The interested reader is directed to \cite{phd}.

Before going on to the study of the rewriting rules introduced above, let us give a last simplification. If $a = n[\{d_1\}t_1, \dots, \{d_p\}t_q]$ and $b$ are agents,
then $a\cdot_d b$ will denote the agent obtained by appending $b$ as a new son of the root of $a$ with label $d$, \emph{i.e.} $n[\{d_1\}t_1, \dots, \{d_p\}t_q, \{d\}b]$.
Consider the following non-deterministic rewriting rule on agents:
\[
n[\{d_1\}a_1, \dots, \{d_p\}a_p] \leadsto a_i \cdot_{d_i - 1} (n-1)[\{d_1\}a_1, \dots, \{d_p\}a_p]
\]
Both rewriting rules on triples $(a, d, b)$ are actually instances of this reduction, by the isomorphism $(a, d, b) \mapsto a \cdot_d b$. We let the obvious verification
to the reader. This is helpful, as all that remains to study is this reduction on agents illustrated in Figure \ref{rewrite}. To summarize, if $N(a)$ denotes the length of the longest reduction sequence
starting from an agent $a$, we have the following property.

\begin{proposition}
Let $n, p\geq 0$, $d\geq 2$, then $N_d(n,p) \leq N(n[\{d\}p[]])+1$.
\label{equiv}
\end{proposition}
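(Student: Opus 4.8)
The plan is to combine the Simulation proposition with the final isomorphism $(a,d,b) \mapsto a \cdot_d b$ that collapses the reduction on triples to the single non-deterministic rewriting rule on agents. Recall that Lemma~1 (the strategy-free reformulation) gives $N_d(n,p) = \max\{|s| \mid s \in n \star_d p\}$, and that by the remark following the interaction definition, $s \in n \star_d p$ if and only if $(s,0) \in n[] \star_d p[]$. So it suffices to bound the length of any such pointed visible pointer structure $(s,0)$ by $N(n[\{d\}p[]]) + 1$.

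First I would fix a maximal interaction $(s,0) \in n[] \star_d p[]$ realising the maximum, so that $|s| = N_d(n,p)$. The idea is to run the Simulation proposition iteratively along the positions $i = 0, 1, \dots, |s|-1$. At each step where $s_{i+1}$ is defined, the proposition produces a reduction $(a_i, d_i, b_i) \leadsto (a_{i+1}, d_{i+1}, b_{i+1})$ on triples such that $(s,i) \in a_i \star_{d_i} b_i$ implies $(s,i+1) \in a_{i+1} \star_{d_{i+1}} b_{i+1}$, starting from $(a_0,d_0,b_0) = (n[], d, p[])$. Chaining these gives a reduction sequence on triples of length exactly $|s|-1$ (one reduction per move consumed after the first). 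Then I would transport this along the isomorphism: each reduction step on triples corresponds under $(a,d,b) \mapsto a\cdot_d b$ to exactly one application of the single rewriting rule $\leadsto$ on agents, as the paper has already asserted. The initial triple $(n[], d, p[])$ maps to the agent $n[]\cdot_d p[] = n[\{d\}p[]]$, so we obtain a reduction sequence on agents of length $|s|-1$ starting from $n[\{d\}p[]]$.

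Since $N(n[\{d\}p[]])$ is by definition the length of the \emph{longest} reduction sequence starting from $n[\{d\}p[]]$, we conclude $|s|-1 \leq N(n[\{d\}p[]])$, hence $N_d(n,p) = |s| \leq N(n[\{d\}p[]]) + 1$, which is exactly the claimed inequality. The ``$+1$'' accounts for the first move $s_0$, which is present in the interaction but does not correspond to any reduction step (the simulation only fires when $s_{i+1}$ is defined, i.e.\ from $i \geq 0$ producing the transition into position $1$, so a play of length $m$ yields at most $m-1$ reduction steps).

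The main obstacle I anticipate is purely bookkeeping rather than conceptual: one must verify that the Simulation proposition can genuinely be iterated, i.e.\ that the triple $(a_{i+1}, d_{i+1}, b_{i+1})$ it produces is a legitimate starting point for the next application, and that the resulting sequence of reductions is well-defined throughout the whole play without the relation $\leadsto$ getting stuck before $s$ is exhausted. This hinges on the fact that $s$ is a genuine visible pointer structure of depth at most $d$ lying in $Tr(n) \cap coTr(p)$, so the residual-size and residual-depth invariants packaged into $\star_d$ are preserved at every position, guaranteeing that the side conditions ($d_i > 0$ or $d > 0$) needed to fire a reduction are met precisely when $s_{i+1}$ exists. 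Once the correspondence between ``move consumed'' and ``reduction step fired'' is pinned down as a clean bijection, the counting argument and the $+1$ are immediate.
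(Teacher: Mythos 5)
Your proof is correct and follows essentially the same route as the paper: the paper's own argument is exactly ``iterate the Simulation proposition, transport along the isomorphism $(a,d,b)\mapsto a\cdot_d b$, and add $1$ for the initial move,'' which you have merely spelled out in full (including the chaining and the move-versus-reduction-step counting that the paper calls obvious). The one ``obstacle'' you flag is in fact already discharged by the statement of the Simulation proposition itself, since its conclusion $(s,i+1)\in a'\star_{d'}b'$ is precisely the hypothesis needed for the next iteration.
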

\begin{proof}
Obvious from the simulation lemma, adding $1$ for the initial move which is not accounted for by the reduction on agents. In fact this is an equality, as one can prove using the \emph{reverse} simulation
lemma mentioned above. See \cite{phd}.
\end{proof}

\begin{figure}
\[
\xymatrix{
&n\ar@{-}[ddr]^{d_p}
  \ar@{-}[ddl]_{d_1}&&&&a_i\ar@{-}[dl]
                    \ar@{-}[dr]
                \ar@{-}[drr]^{d_i-1}\\
                &&\ar[r]&&\ar@{.}[rr]&&&n-1
                                        \ar@{-}[dr]^{d_p}
                                        \ar@{-}[dl]_{d_1}\\
a_1\ar@{.}[rr]&&a_q&&&&a_1\ar@{.}[rr]&&a_q
}
\]
\caption{Rewriting rule on agents}
\label{rewrite}
\end{figure}
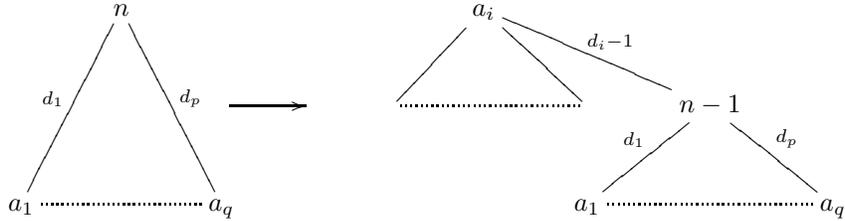

\section{Length of interactions}

The goal of this section is to study the reduction on agents introduced above, and to estimate its maximal length. We will first provide an upper bound for this length, adapting a 
method used by Beckmann \cite{beckmann2001exact} to estimate the maximal length of reductions on simply typed $\lambda$-calculus. We will then discuss the question of lower bounds, and finally
describe an application to head linear reduction.

\subsection{Upper bound}

We define on agents a predicate $\sststile{\rho}{\alpha}$, which introduction rules are compatible both with syntax and reduction.

\begin{definition}
The predicate $\sststile{\rho}{\alpha}$ (where $\rho, \alpha$ range over natural numbers) is defined on agents in the following inductive way.
\begin{itemize}
\item \textsc{Base.} $\sststile{\rho}{\alpha} 0[\{d_1\} a_1, \dots, \{d_p\} a_p]$
\item \textsc{Red.} Suppose $a = n[\{d_1\}a_1, \dots, \{d_p\} a_p]$. Then if for all $a'$ such that $a\leadsto a'$ we have $\sststile{\rho}{\alpha} a'$ and if
we also have $\sststile{\rho}{\alpha}(n-1)[\{d_1\}a_1, \dots, \{d_p\} a_p]$, then $\sststile{\rho}{\alpha+1} a$.
\item \textsc{Cut.} If $\sststile{\rho}{\alpha} a$, $\sststile{\rho}{\beta} b$ and $d\leq \rho$, then $\sststile{\rho}{\alpha+\beta} a \cdot_d b$.
\end{itemize}
\end{definition}

By this inductive definition, each proposition $\sststile{\rho}{\alpha} a$ is witnessed by a tree using \textsc{Base}, \textsc{Red} and \textsc{Cut}. \textsc{Red}-free trees look like
syntax trees, are easy to build but give few information on the reduction, whereas \textsc{Cut}-free trees look like reduction trees, are difficult to build but give very accurate information
on the length of reduction. The idea of the proof is then to design an automatic way to turn a \textsc{Red}-free tree to a \textsc{Cut}-free tree, \emph{via} a cut elimination lemma.
Let us now give the statement and sketch the proof of the four important lemmas that underlie our reasoning.

A \textbf{context-agent} $a()$ is a finite tree whose edges are labelled by natural numbers, and whose nodes are labelled either by natural numbers, or by the variable $x$, with the constraint that
all edges leading to $x$ must be labelled by the same number $d$; $d$ is called the \textbf{type} of $x$ in $a()$. If
We denote by $a(b)$ the result of substituting of all occurrences of $x$ in $a()$ by $b$. We denote by $a(\emptyset)$ the agent obtained by deleting in $a$
all occurrences of $x$, along with the edges leading to them.

\begin{lemma}[Substitution lemma]
If $\sststile{\rho}{\alpha} a(\emptyset)$, $\sststile{\rho}{\beta} b$ and $d \leq \rho + 1$ (where $d$ is the type of $x$ in $a$), then $\sststile{\rho}{\alpha(\beta+1)} a(b)$
\label{main_substitution}
\end{lemma}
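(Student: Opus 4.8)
The plan is to prove the statement by induction on the derivation witnessing $\sststile{\rho}{\alpha} a(\emptyset)$, with a case analysis on its last rule. Throughout I will freely use the evident \emph{monotonicity} of the predicate (if $\sststile{\rho}{\alpha} c$ and $\alpha \le \alpha'$, then $\sststile{\rho}{\alpha'} c$), which is checked by a straightforward induction, together with the invariance of the predicate under permuting the children of a node. Recall that the root of a context-agent is always a numeral, so occurrences of $x$ sit strictly below the root. I will write the top-level children of $a()$ as a mixture of genuine context-subtrees $c_1(), \dots, c_p()$ (roots numerals, reached by edges $d_1, \dots, d_p$) and of bare variable nodes $x$ reached by edges of the common type $d$. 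Substituting gives $a(\emptyset) = n[\{d_1\}c_1(\emptyset), \dots, \{d_p\}c_p(\emptyset)]$ and $a(b) = n[\{d_1\}c_1(b), \dots, \{d_p\}c_p(b), \{d\}b, \dots, \{d\}b]$, the latter with one copy of $b$ per bare $x$.

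The \textsc{Base} case is immediate: if $a(\emptyset)$ has root $0$ then so does $a(b)$, and \textsc{Base} yields $\sststile{\rho}{\alpha(\beta+1)} a(b)$ for free. For \textsc{Cut}, the derivation peels some non-$x$ top-level subtree, so I would write $a() = a_1() \cdot_e a_2()$ with $a_2(\emptyset)$ the peeled subtree, $e \le \rho$, $\alpha = \alpha_1 + \alpha_2$, keeping all bare-$x$ children inside $a_1()$ (this is where permutation-invariance is used). Applying the induction hypothesis to $a_1()$ and $a_2()$ and recombining by \textsc{Cut} (legal since $e \le \rho$) gives bound $\alpha_1(\beta+1) + \alpha_2(\beta+1) = \alpha(\beta+1)$, as required.

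The heart of the argument is the \textsc{Red} case. Here $a(\emptyset) = n[\{d_1\}c_1(\emptyset), \dots]$ with $n \ge 1$, $\alpha = \gamma+1$, and the derivation supplies $\sststile{\rho}{\gamma} a'$ for every reduct $a(\emptyset) \leadsto a'$ together with $\sststile{\rho}{\gamma}(n-1)[\dots]$. I aim to apply \textsc{Red} to $a(b)$ with target bound $\gamma(\beta+1)+\beta$, so that the final step yields $\gamma(\beta+1)+\beta+1 = (\gamma+1)(\beta+1) = \alpha(\beta+1)$. The reducts of $a(b)$ split into two families. A reduction on a child $c_i(b)$ (edge $d_i > 0$) mirrors one of $a(\emptyset)$: the two reducts are $a''(b)$ and $a''(\emptyset)$ for a common context $a''()$ (pull $c_i$ to head position, re-append the root), so the induction hypothesis upgrades $\sststile{\rho}{\gamma} a''(\emptyset)$ to $\sststile{\rho}{\gamma(\beta+1)} a''(b)$, which monotonicity weakens to the target. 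The root-decrement $(n-1)[\dots]$ of $a(b)$ equals $a'''(b)$ with $a'''(\emptyset)=(n-1)[\dots]$, and is handled identically from the \textsc{Red} premise.

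The delicate family, and the main obstacle, is a reduction on one of the copies of $b$, along an edge of type $d$ (requiring $d>0$), producing the reduct $b \cdot_{d-1} (n-1)[\dots]$. Here I would combine $\sststile{\rho}{\beta} b$ (the hypothesis) with $\sststile{\rho}{\gamma(\beta+1)} (n-1)[\dots]$ (induction hypothesis on the root-decrement) by \textsc{Cut} along the edge $d-1$. The one point where the assumption $d \le \rho+1$ is genuinely needed is exactly this: after the step the connecting edge is $d-1 \le \rho$, precisely the side condition licensing \textsc{Cut}. The resulting bound $\beta + \gamma(\beta+1) = \gamma(\beta+1)+\beta$ hits the target exactly. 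Having bounded every reduct and the root-decrement by $\gamma(\beta+1)+\beta$, one application of \textsc{Red} closes the case. The difficulty is thus not any single calculation but the bookkeeping: recognising that reductions into copies of $b$ are exactly the steps that lower the type of $x$ from $\rho+1$ to $\rho$, and that the multiplicative blow-up $(\beta+1)$ is forced by the additive $\beta$ paid once for each such step on top of the inductively propagated $\gamma(\beta+1)$.
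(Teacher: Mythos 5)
Your proof is correct and follows essentially the same route as the paper's: induction on the witness derivation (implicitly strengthened, as in the paper, to all context-agents sharing the same erasure $a(\emptyset)$), with the same three cases, the same key move of handling a reduct that enters a copy of $b$ by a \textsc{Cut} along the edge $d-1 \le \rho$ against the root-decrement, and the same arithmetic $\gamma(\beta+1)+\beta+1 = (\gamma+1)(\beta+1)$. The only differences are cosmetic bookkeeping: the paper groups the bare occurrence of $x$ outside, as $(a_1 \cdot_{d'} a_2)\cdot_d x$, in the \textsc{Cut} case, whereas you keep the bare-$x$ children inside $a_1()$; the two groupings agree up to permutation of children.
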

\begin{proof}
We prove by induction on the tree witness for $\sststile{\rho}{\alpha} a(\emptyset)$ that the above property is true for all context-arena $a'()$ such that $a(\emptyset) = a'(\emptyset)$. The way
to handle each case is essentially forced by the induction hypothesis.
\end{proof}

\begin{lemma}[Cut elimination lemma]
Suppose $\sststile{\rho+1}{\alpha} a$. Then if $\alpha = 0$, $\sststile{\rho}{0} a$. Otherwise, $\sststile{\rho}{2^{\alpha-1}} a$.
\end{lemma}
\begin{proof}
By induction on the witness for $\sststile{\rho+1}{\alpha} a$, using the substitution lemma when the last rule is \textsc{Cut} with a type of $\rho+1$.
\end{proof}

\begin{lemma}[Recomposition lemma]
Let $a$ be an agent. Then $\sststile{depth(a)}{max(a)|a|} a$, where $depth(a)$ is the maximal label of an edge in $a$, $max(a)$ is the maximal label of a node and $|a|$ is the number of nodes.
\end{lemma}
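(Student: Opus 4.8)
The plan is to prove the statement by structural induction on the agent $a$, assembling the witness from the root outward: a short ``spine'' derivation handles the root node using only \textsc{Base} and \textsc{Red}, and the subtrees are grafted on with \textsc{Cut}. Two preliminary weakening properties of the predicate make this possible, both established by induction on the derivation (witness) tree rather than on the agent: (i) monotonicity in $\rho$, namely $\sststile{\rho}{\alpha} a$ and $\rho\leq\rho'$ imply $\sststile{\rho'}{\alpha} a$; and (ii) monotonicity in $\alpha$, namely $\sststile{\rho}{\alpha} a$ and $\alpha\leq\alpha'$ imply $\sststile{\rho}{\alpha'} a$. For (i) the only point worth noting is the \textsc{Red} case, whose premise ranges over all reducts $a'$ with $a\leadsto a'$; one reweakens each sub-derivation, and the \textsc{Cut} side condition $d\leq\rho\leq\rho'$ is preserved. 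For (ii), \textsc{Base} permits any index, \textsc{Red} is reweakened at its premises, and a \textsc{Cut} of index $\alpha+\beta$ is reweakened by absorbing the slack into one summand.

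First I would establish the spine bound $\sststile{\rho}{n} n[]$ for every $\rho$ and every $n$, by induction on $n$. The agent $n[]$ is a normal form (the reduction rule requires a child $a_i$, and $n[]$ has none), so the reduct premise of \textsc{Red} is vacuous; hence from $\sststile{\rho}{n-1}(n-1)[]$ a single \textsc{Red} step yields $\sststile{\rho}{n} n[]$, the base $n=0$ being an instance of \textsc{Base}. This already settles the lemma for single-node agents, where $depth(a)=0$, $max(a)=n$ and $|a|=1$.

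For the inductive step write $a = n[\{d_1\}a_1,\dots,\{d_p\}a_p]$ with $p\geq 1$, so that $a = (\cdots(n[]\cdot_{d_1}a_1)\cdots)\cdot_{d_p}a_p$. Since every $d_i$ and every edge of each $a_i$ is an edge of $a$, we have $d_i\leq depth(a)$ and $depth(a_i)\leq depth(a)$; likewise $max(a_i)\leq max(a)$ and $n\leq max(a)$. The induction hypothesis gives $\sststile{depth(a_i)}{max(a_i)|a_i|} a_i$, and the two weakenings promote this to $\sststile{depth(a)}{max(a)|a_i|} a_i$. The spine bound, weakened in $\alpha$, gives $\sststile{depth(a)}{max(a)} n[]$. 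Now apply \textsc{Cut} once per child, each time with $d_i\leq depth(a)=\rho$; the indices add, producing $\sststile{depth(a)}{max(a)(1+\sum_i |a_i|)} a$. Since $1+\sum_i |a_i| = |a|$, this is exactly $\sststile{depth(a)}{max(a)|a|} a$.

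The only genuine subtlety is the $\rho$-weakening (i): because the \textsc{Red} rule quantifies universally over reducts, it cannot be verified by induction on the agent and must be handled by induction on the derivation. A secondary bookkeeping point is that \textsc{Cut} adds indices exactly, so the children's indices must first be inflated to the common value $max(a)|a_i|$ via (ii) before grafting; this is precisely what makes the final index come out as the stated product rather than a mere inequality.
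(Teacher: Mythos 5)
Your proof is correct and follows essentially the same route as the paper's: the paper likewise first establishes the admissible rule $\sststile{\rho}{\alpha+n} n[]$ (your ``spine bound'', with the slack $\alpha$ built in rather than recovered afterwards by monotonicity, exploiting that $n[]$ has no reducts), and then concludes by induction on $a$ using only this rule, \textsc{Cut} and the monotonicity lemma. Your side remark that monotonicity is best proved by induction on the derivation rather than on the agent is a fair refinement of the paper's terser formulation, but the overall argument coincides.
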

\begin{proof}
By induction on $a$.
\end{proof}

\begin{lemma}[Bound lemma]
Let $a$ be an agent, then if $\sststile{0}{\alpha} a$, $N(a) \leq \alpha$.
\end{lemma}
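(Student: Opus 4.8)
The plan is to induct on the tree witnessing $\sststile{0}{\alpha} a$, exploiting two features of the $\rho = 0$ case: the \textsc{Cut} rule can then only append a subtree along an edge labelled $0$ (since it requires $d \le \rho = 0$), and the rewriting rule $\leadsto$ only ever promotes a son reached by an edge of strictly positive label. Recall that $N(a)$ is the length of the longest $\leadsto$-chain from $a$, so it suffices to bound, for each possible first step $a \leadsto a'$, the quantity $1 + N(a')$ and to take the maximum. The three cases \textsc{Base}, \textsc{Red} and \textsc{Cut} of the witness are treated in turn.

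The \textsc{Base} case is immediate: if $a = 0[\{d_1\}a_1, \dots, \{d_p\}a_p]$, then since the rewriting rule applies only to a root labelled $n > 0$, no reduction is possible, $N(a) = 0$, and the bound holds. The \textsc{Red} case is where the counting happens but is still essentially forced. If $\sststile{0}{\alpha+1} a$ was derived by \textsc{Red} from $a = n[\{d_1\}a_1, \dots, \{d_p\}a_p]$, then the reducts of $a$ are exactly the agents $a' = a_i \cdot_{d_i-1}(n-1)[\{d_1\}a_1, \dots, \{d_p\}a_p]$ for the indices $i$ with $d_i > 0$, and each such $a'$ carries a sub-witness $\sststile{0}{\alpha} a'$. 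By the induction hypothesis $N(a') \le \alpha$ for every first step, whence $N(a) \le 1 + \alpha = \alpha + 1$. I note that the auxiliary premise $\sststile{0}{\alpha}(n-1)[\{d_1\}a_1, \dots, \{d_p\}a_p]$ of \textsc{Red} is not needed for this lemma; it is present to feed the substitution and cut-elimination lemmas.

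The \textsc{Cut} case is the crux. Here $\sststile{0}{\alpha+\beta} a \cdot_d b$ is obtained from $\sststile{0}{\alpha} a$, $\sststile{0}{\beta} b$ and $d \le 0$, forcing $d = 0$. The key observation I would establish is that appending $b$ along a $0$-labelled edge creates no new reduction: every rewriting step promotes only a son whose incoming edge has a strictly positive label, and no step ever alters the incoming label of a son that is neither the promoted son nor the freshly demoted root. Hence the $0$-labelled edge to $b$ persists throughout every reduction, $b$ is never selected for promotion, and $b$ is merely carried along as a frozen subtree. Consequently the reduction sequences of $a \cdot_0 b$ are in length-preserving bijection with those of $a$, so $N(a \cdot_0 b) = N(a)$; combined with the induction hypothesis $N(a) \le \alpha$ this gives $N(a \cdot_0 b) = N(a) \le \alpha \le \alpha + \beta$, as required.

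The main obstacle is precisely this \textsc{Cut} case, and in particular making rigorous the informal claim that the $0$-labelled subtree is inert. I would isolate it as an auxiliary lemma $N(a \cdot_0 b) = N(a)$ and prove it by tracking the incoming edge of $b$ along an arbitrary reduction, checking by direct inspection of the rewriting rule that this edge stays labelled $0$ after each step (the rule decrements only the edge of the promoted son and creates one new edge to the demoted root, leaving all other son-edges untouched). This shows $b$ can never be the promoted son, and the two families of reduction sequences then correspond step for step.
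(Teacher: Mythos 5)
Your proof is correct, but it discharges the \textsc{Cut} case by a genuinely different mechanism than the paper. The paper first shows that any witness of $\sststile{0}{\alpha} a$ may be assumed \textsc{Cut}-free: at $\rho = 0$ every \textsc{Cut} has type $d = 0$, so the Null substitution lemma (Lemma \ref{first_substitution}), applied to the context-agent $a \cdot_0 x$, turns the premise $\sststile{0}{\alpha} a$ directly into $\sststile{0}{\alpha} a \cdot_0 b$ (even sharper than $\alpha + \beta$) without increasing the number of cuts; the induction on the resulting \textsc{Cut}-free witness then consists of exactly your \textsc{Base} and \textsc{Red} cases. You instead keep the witness as given and handle \textsc{Cut} in-line by a combinatorial fact about the rewriting system itself, namely $N(a \cdot_0 b) = N(a)$. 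Both arguments rest on the same intuition --- a subtree hanging by a $0$-labelled edge is inert --- but the paper encodes it proof-theoretically, as a transformation of witness trees (reusing a lemma it needs anyway for the base cut-elimination lemma, Lemma \ref{specialized_cut_elim}), whereas you prove it semantically, as an equality of reduction lengths. Your route is more self-contained for this one lemma; the paper's factorisation keeps the final induction trivial and shares work with the rest of the development.

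One step of your auxiliary lemma needs more care than your sketch provides. The rewriting rule does not move the promoted son to the root: it \emph{duplicates} it, since $a_i$ becomes the new root while the demoted root $(n-1)[\{d_1\}a_1, \dots, \{d_p\}a_p]$ retains all of its sons, including $a_i$. Consequently $b$ is not a single subtree whose incoming edge you can track: once $b$ sits under the demoted root and that node is later promoted, $b$ gets copied, and after several steps there may be many occurrences. The repair is routine but should be stated: the invariant is that \emph{every residual} of $b$ hangs by a $0$-labelled edge (duplication preserves edge labels), hence no residual is ever promotable, and erasing all residuals commutes with each reduction step; this yields the length-preserving correspondence and $N(a \cdot_0 b) = N(a)$. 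With the invariant phrased over residuals rather than over ``the'' copy of $b$, your argument goes through.
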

\begin{proof}
The only used rules are \textsc{Base}, \textsc{Red} and \textsc{Cut} with $\rho = 0$. These \textsc{Cut} rules do not add any possible reduction and are easy to eliminate, then the lemma
is easily proved by induction on $a$.
\end{proof}

These lemmas are sufficient to give a first upper bound, by iterating the cut elimination lemma starting from the witness tree for $\sststile{\rho}{\alpha} a$ generated by the recomposition lemma. However when the
type is small, some of the lemmas above can be improved. For instance if $\sststile{0}{\alpha} a(\emptyset)$, $\sststile{0}{\beta} b$ and the type of $x$ in $a()$ is $1$, then $\sststile{0}{\alpha+\beta} a(b)$, 
since once the reduction reaches $b$ it will never enter $a()$ again. Using this we get a ``base" cut-elimination lemma, stating that for all $a$, whenever $\sststile{1}{\alpha} a$ then we have actually 
$\sststile{0}{\alpha} a$ instead of $\sststile{0}{2^{\alpha-1}} a$. Using this, we prove the following.

\begin{theorem}[Upper bound]
Let $depth(a)$ denote the highest edge label in $a$, $max(a)$ means the highest node label and $|a|$ means the number of nodes of $a$. Then if $depth(a)\geq 1$ and $max(a) \geq 1$ we have:
\[
N(a) \leq 2_{depth(a)-1}^{max(a)|a|-1}
\]
For the particular case when $a=n[\{d\}p[]]$ and if $d\geq 2$ we have:
\[
N_d(n, p) \leq 2_{d-2}^{n(p+1)}
\]
\end{theorem}
\begin{proof}
Both proofs are rather direct. For the first part, by the recomposition lemma we have $\sststile{depth(a)}{max(a)|a|} a$. It suffices then to apply $depth(a)-1$ times
the cut elimination lemma, then use the ``base" cut-elimination lemma to eliminate the remaining cuts. For the second part we reason likewise, but rely on the substitution lemma instead of the recomposition
lemma to get $\sststile{d-1}{n(p+1)} n[\{d\}p[]]$, which gives $N(n[\{d\}p[]]) \leq 2_{d-2}^{n(p+1)-1}$. But we have $N_d(n, p) \leq N(n[\{d\}p[]])+1$ by Proposition \ref{equiv}, which concludes the proof.
\end{proof}

Note that whereas the bounds in \cite{beckmann2001exact} are asymptotic and give poor quantitative information if instantiated on small types, our bound does provide valuable information on interactions with
small depth. For instance, if $\sigma : A\tto B$ and $\tau: B \tto C$ such that $\size(\sigma) = p$,
$\size(\tau) = n$ and the depth of $B$ is at most $2$, then no interaction between $\sigma$ and $\tau$ can be longer than $N_3(n, p) \leq 2^{n(p+1)}$. As we will see below, this can not be significantly improved.
In fact, we conjecture that for all $n\geq 1$ and $p\geq 2$, we have $N_3(n,p) = 2\frac{p^n-1}{p- 1}+1$ : this was found and machine-checked for all $n+p \leq 17$ thanks to an implementation of agents and their
reduction, unfortunately we could not prove its correctness, nor generalize it to higher depths.

\subsection{Lower bound}

As argued in the introduction, the upper bound above applies to several programming languages executed by head linear reduction, possibly featuring non determinism and/or ground type references, therefore the fact that
we used game semantics to prove it increases its generality. On the other hand, if we try to give the closest possible lower bound for $N_d(n,p)$ using the full power of visible pointer structures, we would get a lower
bound without meaning for most languages concerned by the upper bound, since pointer structures have no innocence or determinism requirements\footnote{Our experiments with pointer structures and agents confirmed indeed that the
possibility to use non-innocent behaviour does allow significantly longer plays.}. Therefore what makes more sense is to describe a lower bound in the more restricted possible framework, \emph{i.e.}
simply typed $\lambda$-calculus.

We won't detail the construction much, as the method is standard and does not bring a lot to our analysis. The idea is to define higher types for church integers by $A_0 = \bot$ and $A_{n+1} = A_n \to A_n$. Then, 
denoting by $\church{n}_p$ the church integer for $n$ of type $A_{p+2}$, we define $S_n = \church{2}_n \church{2}_{n-1} \dots \church{2}_0 : A_2$. We apply then $S_n$ to $id_\bot$ to get a term whose head linear
reduction chain has at least $2_{n+1}^1$ steps. In game semantics, $\intr{\church{2}_n}$ has size $n+3$ and all other components have size smaller than $n+2$, the depth of the ambient arena being $n+2$. The function $N_d(n,p)$ being
monotonically increasing in all its parameters we have the following inequalities for $3 \leq d \leq min(n-1, p)$, both bounds making sense for all programming languages containing the simply-typed $\lambda$-calculus and whose 
terms can be interpreted as bounded strategies.
\[
2_{d-2}^2 \leq N_d(n,p) \leq 2_{d-2}^{n(p+1)}
\]
Note that from this we can deduce bounds for $N(n,p)$, when we have no information on the depth of the ambient arena. Indeed, we always have $d\leq 2n$ and $d\leq 2p+1$ because a pointer chain in a play
is visible by both players. Thus, $N(n,p) = N_{min(2n, 2p+1)}(n,p)$.

\subsection{Application to head linear reduction}

Earlier works on game semantics \cite{danosregnier} suggest that in every games model of a programming language lies a hidden notion of linear reduction, head linear reduction
when modelling call-by-name evaluation: this is the foundation for our claim that our game-theoretic result is really about the length of execution in programming languages whose terms can
be described as bounded strategies. Of course it requires some work to interface execution in these programming languages to our game-theoretic results, and part of this work has to be redone in each case. 
To illustrate this, we now describe how to extract from our results a theorem about the length of head linear reduction sequences in simply-typed $\lambda$-calculus. For the formal definition of head linear reduction, the reader
is directed to \cite{danos:abstract}. If $S$ is a $\lambda$-term then the \textbf{spinal height} of $S$ is the quantity $sh(S)$ defined by induction as $sh(x) = 1$, $sh(\lambda x.S) = sh(S)$ and $sh(S T) = max(sh(S), sh(T)+1)$; when $S$ is a $\beta\eta$-normal form, $sh(S)$ is nothing but the height of its Böhm tree.
The \textbf{height} of $S$ is the subtly different quantity\footnote{One can easily prove that on closed terms, it is always less than the more common notion of height defined as
$h(x) = 0$, $h(\lambda x. S) = 1 + h(S)$ and $h(S T) = max(h(S), h(T))+1$, for which our upper bound consequently also holds.} $h(S)$ defined by $h(x) = 1$, $h(\lambda x. M) = h(M)$ and $h(M N) = max(h(M), h(N)) + 1$. Finally, the
\textbf{level} of a type $lv(A)$ is defined by $lv(\bot) = 0$ and $lv(A\to B) = max(lv(A) + 1, lv(B))$ and the \textbf{degree} $g(S)$ of a term is the maximal level of the type of all subterms
of $S$.

A \textbf{game situation} \cite{phd} is the data of $\lambda$-terms $S:A_1 \to \dots \to A_p \to B$ and $T_1: A_1, \dots T_p:A_p$ in $\eta$-long $\beta$-normal form, and we are interested
in the term $S T_1 \dots T_p$.  Our game-theoretic results
apply immediately to game situations, because of the connection between game-theoretic interaction and head linear reduction \cite{danosregnier}:
if $N(S T_1 \dots T_p)$ denotes the length of the head linear reduction chain of $S T_1 \dots T_p$, then we have 
$N(S T_1 \dots T_p) \leq N_d(n, p)$ where
$d$ is the depth of the arena corresponding to $A\to B$, $n$ is the size of $\intr{S}$ and $p$ is the maximal size of all of the $\intr{T_i}$. But since $S$ and $T_i$ are already in $\eta$-long
$\beta$-normal form, we have $|\intr{S}| = sh(S)$ and $|\intr{T_i}| = sh(T_i)$. Thus, we conclude that in the case of a game situation we have:
\[
N(S T_1 \dots T_p) \leq 2_{max_i lv(A_i)-1}^{sh(S)(max_i(sh(T_i))+1)}
\]
Outside of game situations, it is less obvious to see how our results apply. The more elegant approach would be probably to extend the connection between head linear reduction
and game semantics to \emph{revealed} game semantics, which would give the adequate theoretical foundations to associate an agent to any $\eta$-long $\lambda$-term. Without
these tools, we can nonetheless apply the following hack. Suppose we have a $\lambda$-term $S$. The idea is to ``delay" all redexes, replacing each redex $(\lambda x. S) T$ of type $A\to B$ in $S$ with 
$y_{A, B} (\lambda x.S) T$, where we add a new symbol $y_{A, B}: (A\to B) \to A \to B$ for each pair $(A, B)$. We iterate this operation until we reach a $\beta$-normal $\lambda$-term $S^t$, which satisfies $sh(S^t) \leq h(S)$.
We then expand $S^t$ to its $\eta$-long form $\eta(S^t)$, which satisfies $sh(\eta(S^t)) \leq sh(S^t)+g(S^t) \leq h(S) + g(S) + 1$. We consider now the term $(\lambda y_1 \dots y_p. \eta(S^t)) ev_1 \dots ev_p$, 
where each $y_i$ binds one of the new symbols $y_{A, B}$, and $ev_i:(A \to B) \to A \to B$ is the ($\eta$-long form of) the corresponding evaluation $\lambda$-term. We recognise here a game situation, whose head linear
reduction chain is necessarily longer than for $S$ (we have only added steps due to the delaying of redexes and $\eta$-expansion). Using the inequality above for game situations, we conclude:
\[
N(S) \leq 2_{g(S)}^{(h(S) + g(S) + 1)(g(S) + 1)}
\]

\section{Conclusion \& future work}

Applied to head linear reduction on simply typed $\lambda$-calculus, our results show that the price of linearity is not as high as one might expect. 
Not only the bounds remain in $\mathcal{E}^4$, but they are only slightly higher than those for usual $\beta$-reduction: in particular, the height 
of the tower of exponentials is the same.

A strength of our method is that it is not restricted to $\lambda$-calculus; the results should indeed immediately apply as well to similar notions of reduction on other total
programming languages. Beyond ground type references and non determinism, there are also games model of call-by-value languages \cite{abramsky-mccusker:families} generating pointer structures as well, thus this work should also provide
bounds for the corresponding call-by-value linear reduction (\emph{tail} linear reduction?). All the tools used here also can be extended to \emph{non-alternating plays} \cite{DBLP:conf/concur/Laird05}, which suggests that this work could
be used to give bounds to the length of reductions in some restricted concurrent languages.

We also believe \emph{agents} are worth studying further. Their combinatorial nature and their connection to execution of programs may prove interesting for the study of higher order systems with restricted complexity, 
such as \emph{light} linear logics \cite{DBLP:journals/iandc/Girard98}. For instance, proofs typable in light systems may correspond to agents with some restricted behaviours, which would make them a valuable tool for 
the study of programming languages with implicit complexity.

\paragraph{Acknowledgements.} This work was partially supported by the French ANR project CHOCO. The author also would like to thank Fabien Renaud for interesting discussions on related subjects.


\newpage

\appendix

\section{pointer structures and rewriting}

\begin{lemma}
Let $s$ be a pointed visible pointer structure and $a=n[\{d_1\}a_1, \dots, \{d_p\}a_p]$ an agent such that $(s, i)\in coTr(a)$. Then
if $s_j\rightarrow s_i$, $(s, j)\in Tr(a)$.
\label{lem_point}
\end{lemma}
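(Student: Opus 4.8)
The plan is to case-split on the polarity of $s_j$. Since justification pointers in $I_\omega$ always connect moves of opposite polarity ($m \vdash n$ forces $\lambda(m)\neq\lambda(n)$), the justifier $s_i$ has the polarity opposite to $s_j$. I would spell out the case where $s_j$ is an $O$-move and $s_i$ a $P$-move; the other case is strictly dual, interchanging $\pview{\cdot}$ with $\oview{\cdot}$ and the size clause with the co-size clause. The skeleton is: (i) establish a view identity relating $\pview{s_{\leq j}}$ to $\pview{s_{\leq i}}$, (ii) derive the size bound $\size(s,j)\leq 2n$ from $\cosize(s,i)\leq 2n+1$, and (iii) show that the context of $(s,j)$ and the co-context of $(s,i)$ are literally the same indexed family, so that the remaining clauses of $(s,j)\in Tr(a)$ are exactly those provided by $(s,i)\in coTr(a)$.

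First I would record the view identity $\pview{s_{\leq j}} = \pview{s_{\leq i}}\,s_j$. As $s_j$ is an $O$-move and the last move of $s_{\leq j}$, the third clause of the $P$-view computation gives $\pview{s_{\leq j}} = \pview{s_{<i}}\,s_i\,s_j$; since $s_i$ is a $P$-move we also have $\pview{s_{\leq i}} = \pview{s_{<i}}\,s_i$, whence the identity and in particular $|\pview{s_{\leq j}}| = |\pview{s_{\leq i}}| + 1$.

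For the size clause, I would use that $\size(s,j) = \max_{s_j\in\pview{s_{\leq k}}}|\pview{s_{\leq k}}| - |\pview{s_{\leq j}}| + 1$ and $\cosize(s,i) = \max_{s_i\in\pview{s_{\leq k}}}|\pview{s_{\leq k}}| - |\pview{s_{\leq i}}| + 1$. The decisive observation is that whenever the backward $P$-view chase from $s_k$ reaches the $O$-move $s_j$, it immediately follows the pointer of $s_j$ and therefore also reaches $s_i$; hence $\{k : s_j\in\pview{s_{\leq k}}\}\subseteq\{k : s_i\in\pview{s_{\leq k}}\}$, so the maximum defining $\size(s,j)$ is taken over a subset of that defining $\cosize(s,i)$. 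Combined with the length identity this gives $\size(s,j)\leq\cosize(s,i)-1\leq 2n$, which is the first requirement for membership in $Tr(a)$.

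Finally I would match the contextual clauses. Using the view identity, the context of $(s,j)$ --- the $O$-moves of the current $P$-view $\pview{s_{\leq j}}$ other than $s_j$, i.e. the legal targets of $s_{j+1}$ --- is exactly the set of $O$-moves of $\pview{s_{\leq i}}$, which coincides with the $O$-moves of $\pview{s_{<i}}$ since $s_i$ is a $P$-move, that is, with the co-context of $(s,i)$. So both are the same family $\{s_{n_1},\dots,s_{n_p}\}$; as residual depth depends only on the position $n_k$, the depth bounds and the recursive memberships $(s,n_k)\in coTr(a_k)$ demanded for $(s,j)\in Tr(a)$ are verbatim those supplied by $(s,i)\in coTr(a)$, and the three clauses assemble to the conclusion. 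I expect this last step to be the main obstacle: it depends on reading the context of $(s,j)$ off the \emph{reset} view $\pview{s_{\leq j}}$ (equivalently, as the moves $s_{j+1}$ may point to) rather than off $\pview{s_{<j}}$, since the pointer of $s_j$ discards precisely the part of the earlier view lying above $s_i$ --- and the careful bookkeeping of views here, together with its $O$-view dual in the symmetric case, is where the work concentrates.
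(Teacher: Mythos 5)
Your proposal is correct and follows essentially the same route as the paper's own proof: fix one polarity case (the other being strictly dual), derive the view identity $|\pview{s_{\leq j}}| = |\pview{s_{\leq i}}|+1$ from the pointer $s_j \rightarrow s_i$, deduce $\size(s,j) \leq \cosize(s,i)-1 \leq 2n$ by comparing the two maxima, and identify the context of $(s,j)$ with the co-context of $(s,i)$ so that the recursive memberships $(s,n_k)\in coTr(a_k)$ and the depth bounds transfer verbatim. If anything you are more careful than the paper, which calls the size inequality ``obvious'' (you supply the needed index-set inclusion $\{k : s_j\in\pview{s_{\leq k}}\}\subseteq\{k : s_i\in\pview{s_{\leq k}}\}$) and which phrases the context identification as an equality of the move-sets of the views at $j$ and at $i$ computed on strict prefixes --- a reading literally valid only when $j=i+1$, the case used in the Simulation proposition --- whereas your reading of the context off the reset view $\pview{s_{\leq j}}$, i.e. as the legal targets of $s_{j+1}$, handles an arbitrary pointer $s_j \rightarrow s_i$.
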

\begin{proof}
Let us suppose without loss of generality that $s_i$ is an Opponent move; the other case can be obtained just by switching Player/Opponent and $P$-views/$O$-views
everywhere. Then $s_j$ being a Player move, we have to check first that $\size(s, j)\leq 2n$, \emph{i.e.}
\[
\max_{s_j \in \oview{s_{\leq k}}} |\oview{s_{\leq k}}| - |\oview{s_{\leq j}}| + 1 \leq 2n
\]
We use that $\cosize(s, i) \leq 2n+1$, \emph{i.e.}
\[
\max_{s_i \in \oview{s_{\leq k}}} |\oview{s_{\leq k}}| - |\oview{s_{\leq i}}| + 1 \leq 2n+1
\]
But $s_j \rightarrow s_i$, hence $|\oview{s_{\leq j}}| = |\oview{s_{\leq i}}| + 1$ and the inequality is obvious. We need now to examine
the context of $(s, j)$. Since $s_j$ is a Player move, it is defined as the set $\{s_{n_1}, \dots, s_{n_p}\}$ of Player moves appearing in
$\oview{s_{<j}}$, which is also the set of Player moves appearing in $\oview{s_{<i}}$ and therefore the co-context of $(s, i)$. But $(s, i)\in coTr(a)$,
hence for all $k\in \{1, \dots, p\}$ we have $(s, n_k)\in coTr(a_k)$ which is exactly what we needed.
\end{proof}

\begin{proposition}[Simulation]
Let $(s,i)\in a\star_d b$, then if $s_{i+1}$ is defined, there exists $(a,d,b)\leadsto (a',d',b')$ such that $(s, i+1)\in a'\star_{d'} b'$.
\end{proposition}
\begin{proof}
Suppose $a = n[\{d_1\}a_1, \dots, \{d_p\}a_p]$.
Let $\{s_{n_1}, \dots, s_{n_p}\}$ be the context of $(s, i)$. By visibility, $s_{i+1}$ must either point to $s_i$ or to an element of the context. Let us
distinguish cases.
\begin{itemize}
\item If $s_{i+1} \rightarrow s_i$, then we claim that $(s, i+1) \in b\star_{d-1} (n-1)[\{d_1\}a_1, \dots, \{d_p\}a_p, \{d\}b]$, \emph{i.e} $(s, i+1) \in Tr(b)$,
$(s, i+1) \in coTr((n-1)[\{d_1\}a_1, \dots, \{d_p\}a_p, \{d\}b])$ and the depth of $s$ relative to $i+1$ is at most $d-1$. For the first part, we use
that $(s, i)\in a\star_d b$ : in particular, $(s, i)\in coTr(b)$ and since $s_{i+1} \rightarrow s_i$ this implies by Lemma \ref{lem_point} that $(s, i+1)\in Tr(b)$.
For the second part, we must first check that $\cosize(s, i+1) \leq 2(n-1) + 1$. Let us suppose without loss of generality that $s_i$ is an Opponent move, all the
reasoning below can be adapted by switching Player/Opponent and $P$-views/$O$-views everywhere. We want to prove:
\[
\cosize(s,i+1)  = \max_{s_{i+1} \in \pview{s_{\leq j}}} |\pview{s_{\leq j}}| - |\pview{s_{\leq i+1}}| + 1 \leq 2(n-1) + 1
\]
But since $(s, i) \in Tr(a)$, we already know:
\[
\size(s, i) = \max_{s_i \in \pview{s_{\leq j}}} |\pview{s_{\leq j}}| - |\pview{s_{\leq i}}| + 1 \leq 2n
\]
Thus we only need to remark that $|\pview{s_{\leq i+1}}| = |\pview{s_{\leq i}}| + 1$ since $s_{i+1}$ is a Player move.
Now, we must examine the co-context of $(s, i+1)$, but by definition of $P$-view it is $\{s_{n_1}, \dots, s_{n_p}, s_i\}$ where $\{s_{n_1}, \dots, s_{n_p}\}$
is the context of $(s, i)$. Since $(s,i) \in Tr(n[a_1, \dots, a_p])$ we have as required $(s,n_k)\in coTr(a_k)$ for each $k\in \{1, \dots, p\}$
and $(s,i)\in coTr(b)$ because $(s, i)\in a\star_d b$. For the third part, we have to prove that the depth of $s$ relative to $i+1$ is at most $d-1$, but
it is obvious since the depth relative to $i$ is at most $d$ and $s_{i+1}\rightarrow s_i$.

\item Otherwise, we have $s_{i+1} \rightarrow s_{n_j}$ for $j\in \{1, \dots, p\}$. Then, we claim
that $(s, i+1) \in a_j \star_{d_i - 1} (n-1)[\{d_1\}a_1, \dots, \{d_p\}a_p, \{d\}b]$. We do have
$(s, i+1) \in Tr(a_j)$ because $(s, i)\in Tr(n[\{d_1\}a_1, \dots, \{d_p\}a_p])$, thus $(s, n_j)\in coTr(a_j)$ and $(s, i+1) \in Tr(a_j)$ by Lemma \ref{lem_point}.
It remains to show that $(s, i+1) \in coTr((n-1)[\{d_1\}a_1, \dots, \{d_p\}a_p, \{d\}b])$ and that the depth of $s$ relative to $i+1$ is at most $d_1 - 1$, but
the proofs are exactly the same as in the previous case.
\end{itemize}
\end{proof}

\section{Upper bound}

\begin{lemma}[Monotonicity]
If $\sststile{\rho}{\alpha} a$, then $\sststile{\rho'}{\alpha'} a$ for all $\alpha \leq \alpha'$ and $\rho \leq \rho'$.
\label{monotonicity}
\end{lemma}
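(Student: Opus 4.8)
The plan is to proceed by a straightforward induction on the derivation tree witnessing $\sststile{\rho}{\alpha} a$, showing in each case that the judgement can be re-derived at arbitrary $\rho' \geq \rho$ and $\alpha' \geq \alpha$. Since the predicate is generated by exactly three introduction rules, the induction splits into three cases according to the last rule applied. The key observation is that each rule is itself ``monotone'' in a suitable sense: \textsc{Base} constrains neither index, \textsc{Red} only increments $\alpha$, and the side condition of \textsc{Cut} survives enlarging $\rho$.

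If the last rule is \textsc{Base}, then $a = 0[\{d_1\}a_1, \dots, \{d_p\}a_p]$ and the rule places no constraint whatsoever on its indices, so I would simply re-apply \textsc{Base} to obtain $\sststile{\rho'}{\alpha'} a$ directly, with no appeal to the induction hypothesis. If the last rule is \textsc{Red}, then $a = n[\{d_1\}a_1, \dots, \{d_p\}a_p]$ and the concluded index has the form $\alpha+1$, with premises $\sststile{\rho}{\alpha} a'$ for every $a'$ with $a \leadsto a'$, together with $\sststile{\rho}{\alpha}(n-1)[\{d_1\}a_1, \dots, \{d_p\}a_p]$. Given targets $\rho' \geq \rho$ and $\alpha' \geq \alpha+1$, I note that $\alpha' \geq 1$ and $\alpha'-1 \geq \alpha$, so the induction hypothesis raises every premise to index $(\rho', \alpha'-1)$; a single application of \textsc{Red} then yields $\sststile{\rho'}{(\alpha'-1)+1} a$, that is $\sststile{\rho'}{\alpha'} a$.

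The only case requiring a little care is \textsc{Cut}, where the agent is of the form $c \cdot_d e$ and the concluded index is $\alpha + \beta$, with premises $\sststile{\rho}{\alpha} c$, $\sststile{\rho}{\beta} e$ and side condition $d \leq \rho$. Given targets $\rho' \geq \rho$ and $\gamma \geq \alpha + \beta$, the point is to redistribute the surplus $\gamma - (\alpha+\beta)$ among the two premises. I would load it entirely onto the left premise, setting its new index to $\alpha'' = \gamma - \beta$, which satisfies $\alpha'' \geq \alpha$ precisely because $\gamma \geq \alpha + \beta$, while keeping the right premise at $\beta$. The induction hypothesis then gives $\sststile{\rho'}{\alpha''} c$ and $\sststile{\rho'}{\beta} e$, and the side condition is preserved since $d \leq \rho \leq \rho'$; a final application of \textsc{Cut} produces $\sststile{\rho'}{\alpha''+\beta} c \cdot_d e$, where $\alpha'' + \beta = \gamma$ as required. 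The main (and essentially only) obstacle is thus the additive bookkeeping in this \textsc{Cut} case, and it dissolves once one decides to absorb the entire surplus into a single cut premise rather than splitting it.
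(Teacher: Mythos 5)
Your proof is correct. The paper itself disposes of this lemma with a single line, ``By induction on $a$'', so there is little to compare against; but it is worth noting that your choice of induction measure is actually the more robust one. A literal structural induction on the agent $a$ runs into trouble in the \textsc{Red} case: the premises of \textsc{Red} concern the reducts $a'$ with $a \leadsto a'$, and a reduct $a_i \cdot_{d_i-1}(n-1)[\{d_1\}a_1,\dots,\{d_p\}a_p]$ is in general a \emph{larger} tree than $a$, not a subagent, so the structural induction hypothesis would not apply to it. Inducting instead on the derivation witnessing $\sststile{\rho}{\alpha} a$, as you do, makes every premise of the last rule a strict subderivation and the argument goes through without any well-foundedness worries; this is also the induction the paper uses for all its other lemmas about the predicate (the substitution and cut-elimination lemmas are proved ``by induction on the tree witness''). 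Your handling of the three cases is sound, including the two points that need any care: in \textsc{Red}, peeling off the successor ($\alpha' \geq \alpha+1$ gives $\alpha'-1 \geq \alpha$) before re-applying the rule, and in \textsc{Cut}, absorbing the whole surplus $\gamma - (\alpha+\beta)$ into one premise, which is legitimate since $\gamma - \beta \geq \alpha$ and the side condition $d \leq \rho \leq \rho'$ is preserved.
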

\begin{proof}
By induction on $a$.
\end{proof}

\begin{lemma}[Null substitution lemma]
If $\sststile{\rho}{\alpha} a(\emptyset)$ and the type of $x$ in $a$ is $0$, then for all agent $b$ we still have $\sststile{\rho}{\alpha} a(b)$. Moreover, the witness includes as many \textsc{Cut} rules
as for $\sststile{\rho}{\alpha} a(\emptyset)$.
\label{first_substitution}
\end{lemma}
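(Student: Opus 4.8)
The plan is to prove a strengthened statement by induction on the witness tree for $\sststile{\rho}{\alpha} a(\emptyset)$. Precisely, I would show: for every context-agent $a'()$ whose variable $x$ has type $0$ and satisfies $a'(\emptyset) = c$, where $c$ is the agent at the root of the current sub-witness, and for every agent $b$, there is a witness for $\sststile{\rho}{\alpha} a'(b)$ using the same number of \textsc{Cut} rules. Since all edges leading to $x$ are labelled $0$ (and $x$ therefore occurs only at non-root nodes, each carrying an incoming $0$-edge), such an $a'()$ is just the agent $a'(\emptyset)$ with some copies of the leaf $x$ grafted via $0$-labelled edges; forming $a'(b)$ grafts copies of $b$ at those same positions. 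The decisive structural observation, used throughout, is that a $0$-labelled edge is never eligible for the reduction rule, which requires a strictly positive edge label, so the grafted material is inert and never promoted into a redex. This is exactly what lets us keep the index $\alpha$ unchanged instead of paying the $\alpha(\beta+1)$ cost of the general Substitution Lemma.

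The \textsc{Base} and \textsc{Cut} cases are routine. For \textsc{Base}, the root of $a'(\emptyset)$ is labelled $0$; grafting $b$ below $0$-edges at non-root nodes leaves the root label unchanged, so \textsc{Base} gives $\sststile{\rho}{\alpha} a'(b)$ directly, with zero \textsc{Cut} rules as before. For \textsc{Cut}, we have $a'(\emptyset) = g \cdot_d h$ with $\sststile{\rho}{\alpha_1} g$, $\sststile{\rho}{\alpha_2} h$, $d \le \rho$ and $\alpha = \alpha_1 + \alpha_2$; I would split $a'()$ along the same cut edge as $a'() = g'() \cdot_d h'()$, letting $g'()$ carry the $x$-occurrences inside $g$ together with the extra $x$-children of the root, and $h'()$ those inside $h$, so that $g'(\emptyset) = g$ and $h'(\emptyset) = h$, both still of type $0$. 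The induction hypothesis on the two sub-witnesses yields $\sststile{\rho}{\alpha_1} g'(b)$ and $\sststile{\rho}{\alpha_2} h'(b)$ with unchanged \textsc{Cut}-counts, and one final \textsc{Cut} with the same $d \le \rho$ gives $\sststile{\rho}{\alpha} g'(b) \cdot_d h'(b) = a'(b)$, matching the single extra \textsc{Cut} on both sides.

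The \textsc{Red} case is the heart of the argument. Here $a'(\emptyset) = n[\{d_1\}g_1, \dots, \{d_p\}g_p]$ with $n \ge 1$ and $\alpha = \alpha_0 + 1$, the premises giving $\sststile{\rho}{\alpha_0}$ for every reduct and for $(n-1)[\{d_1\}g_1, \dots, \{d_p\}g_p]$. Writing $a'() = n[\{d_1\}g'_1(), \dots, \{d_p\}g'_p(), \{0\}x, \dots, \{0\}x]$ with $g'_i(\emptyset) = g_i$, I would observe that, because the appended $b$-children sit under $0$-edges, the reducts of $a'(b)$ are exactly $c_i = g'_i(b) \cdot_{d_i - 1} (n-1)[\{d_1\}g'_1(b), \dots, \{d_p\}g'_p(b), \{0\}b, \dots]$ for the same indices $i$ with $d_i > 0$ as the reducts $r_i = g_i \cdot_{d_i-1}(n-1)[\{d_1\}g_1, \dots, \{d_p\}g_p]$ of $a'(\emptyset)$. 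Each $c_i$ equals $\widehat{a_i}(b)$ for a type-$0$ context-agent $\widehat{a_i}()$ with $\widehat{a_i}(\emptyset) = r_i$, and likewise $(n-1)[\dots, \{0\}b, \dots] = \check{a}(b)$ with $\check{a}(\emptyset) = (n-1)[\{d_1\}g_1, \dots]$. Applying the induction hypothesis to the corresponding sub-witnesses produces $\sststile{\rho}{\alpha_0} c_i$ for all eligible $i$ and $\sststile{\rho}{\alpha_0} \check{a}(b)$, all with unchanged \textsc{Cut}-counts; then \textsc{Red} yields $\sststile{\rho}{\alpha_0 + 1} a'(b)$, and since \textsc{Red} introduces no \textsc{Cut} the total count is preserved.

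The main obstacle is precisely the reduct correspondence in the \textsc{Red} case: one must check that grafting $b$ under $0$-labelled edges neither creates new redexes nor alters the root label or the positively-labelled child set that governs reduction, so that the reducts of $a'(b)$ are in bijection with those of $a'(\emptyset)$ and each remains of the form $(\text{type-}0\text{ context-agent})(b)$ whose underlying agent already appears in the original witness. This is where the hypothesis that $x$ has type $0$ is indispensable. The remaining points — that the decomposition $g'() \cdot_d h'()$ and the rebuildings $\widehat{a_i}, \check{a}$ indeed satisfy the required $(\emptyset)$-identities and keep type $0$, and the purely additive bookkeeping of \textsc{Cut} rules — are straightforward verifications.
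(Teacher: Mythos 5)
Your proposal is correct and follows essentially the same route as the paper's proof: the same strengthened induction over all context-agents $a'()$ with the prescribed $(\emptyset)$-collapse, the same three-case analysis, and the same key observation that $0$-labelled edges are never eligible for reduction, so the grafted copies of $b$ are inert and the reducts of $a'(b)$ correspond exactly to those of $a'(\emptyset)$. If anything, you are more explicit than the paper in spelling out the reduct bijection in the \textsc{Red} case and in tracking the \textsc{Cut}-rule count that the statement requires.
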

\begin{proof}
We prove by induction on the tree witness for $\sststile{\rho}{\alpha} a(\emptyset)$ that the above property is true for all context-arena $a'$ such that $a(\emptyset) = a'(\emptyset)$.
\begin{itemize}
\item \textsc{Base.} The root of $a$ is $0$, hence the result is trivial.
\item \textsc{Red.} Suppose $a'$ has the form $n[\{d_1\}a_1, \dots, \{d_p\}a_p, \{d\}x]$, where $a_1, \dots, a_p$ possibly include occurrences of $x$ (the case where $x$ appears as a son
of the root encompasses the other). The premises of \textsc{Red} are then that for $1 \leq i \leq p$ such that
$d_i\geq 1$, $\sststile{\rho}{\alpha-1} a_i(\emptyset) \cdot_{d_i-1} (n-1)[\{d_1\}a_1(\emptyset), \dots \{d_p\}a_p(\emptyset)]$
and $\sststile{\rho}{\alpha-1} (n-1)[\{d_1\}a_1(\emptyset), \dots \{d_p\}a_p(\emptyset)]$. The induction hypothesis on these premises give witnesses for the two following properties:
\begin{eqnarray}
\sststile{\rho}{\alpha-1} (a_i \cdot_{d_i-1} (n-1)[\{d_1\}a_1, \dots, \{d_p\}a_p, \{d\}x])(b)\label{eq1}
\end{eqnarray}
\begin{eqnarray}
\sststile{\rho}{\alpha-1} ((n-1)[\{d_1\}a_1, \dots \{d_p\}a_p, \{d\}x])(b)\label{eq2}
\end{eqnarray}
All the possible reductions are already covered since $d=0$, thus by \textsc{Red} we have $\sststile{\rho}{(\alpha-1)+1} a(b)$ as required.
\item \textsc{Cut.} Let us suppose $\sststile{\rho}{\alpha+\gamma} a(\emptyset)$ is obtained by \textsc{Cut}, hence $a(\emptyset)$ has the form $a_1(\emptyset) \cdot_{d_1} a_2(\emptyset)$. Let us suppose
that $a'$ has the form $(a_1 \cdot_{d'} a_2) \cdot_d x$, since once again the case where $x$ is a child of the root of $a'$ encompasses the other. The premises of \textsc{Cut} are then
$\sststile{\rho}{\alpha} a_1(\emptyset)$ and $\sststile{\rho}{\gamma} a_2(\emptyset)$, and $d'\leq \rho$. Note now that we also have $(a_1 \cdot_d x)(\emptyset) = a_1(\emptyset)$, therefore
the induction hypothesis on $\sststile{\rho}{\alpha} a_1(\emptyset)$ along with $\sststile{\rho}{\beta} b$ and $d\leq \rho+1$ implies that
$\sststile{\rho}{\alpha} a_1(b) \cdot_d b$. But by induction hypothesis we also have $\sststile{\rho}{\gamma}a_2(b)$, hence by \textsc{Cut}:
\[
\sststile{\rho}{\alpha + \gamma} (a_1(b) \cdot_d b) \cdot_{d'} a_2(b)
\]
Which was what was required for $(a_1(b) \cdot_{d'} a_2(b)) \cdot_d b$, thus it suffices since trees are considered up to permutation.
\end{itemize}
\end{proof}

\begin{lemma}[Main substitution lemma]
If $\sststile{\rho}{\alpha} a(\emptyset)$, $\sststile{\rho}{\beta} b$ and $d \leq \rho + 1$ (where $d$ is the type of $x$ in $a$), then $\sststile{\rho}{\alpha(\beta+1)} a(b)$
\label{main_substitution}
\end{lemma}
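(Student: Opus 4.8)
The plan is to prove the Main Substitution Lemma by induction on the derivation tree witnessing $\sststile{\rho}{\alpha} a(\emptyset)$, following the same strategy as the Null Substitution Lemma just proved. As there, I would strengthen the statement to quantify over all context-agents $a'()$ satisfying $a(\emptyset) = a'(\emptyset)$ (so that the variable $x$ may sit anywhere in the tree, not just at the root), which is what makes the induction go through when we peel off rules at the root. The point of the argument is that grafting $b$ of complexity $\beta$ onto a tree of complexity $\alpha$ costs a multiplicative factor $\beta+1$: intuitively, each of the (roughly) $\alpha$ reduction steps taken inside $a()$ may now be shadowed by an entire run through the substituted copy of $b$, whence the product $\alpha(\beta+1)$.

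First I would dispatch the \textsc{Base} case: if the root of $a'$ is $0$ then $\sststile{\rho}{0} a'(b)$ holds immediately by \textsc{Base}, and $0 = 0 \cdot (\beta+1)$. The \textsc{Cut} case is handled as in the previous lemma: $a'(\emptyset)$ splits as $a_1(\emptyset) \cdot_{d'} a_2(\emptyset)$ with premises $\sststile{\rho}{\alpha_1} a_1(\emptyset)$, $\sststile{\rho}{\alpha_2} a_2(\emptyset)$ and $d' \leq \rho$; applying the induction hypothesis to each branch gives $\sststile{\rho}{\alpha_1(\beta+1)} a_1(b)$ and $\sststile{\rho}{\alpha_2(\beta+1)} a_2(b)$, and a final \textsc{Cut} recomposes them at $\alpha_1(\beta+1) + \alpha_2(\beta+1) = (\alpha_1+\alpha_2)(\beta+1)$, exactly the required bound.

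The \textsc{Red} case is where the multiplicative factor is actually consumed, and I expect it to be the main obstacle. Here $a' = n[\{d_1\}a_1, \dots, \{d_p\}a_p, \{d\}x]$ and the \textsc{Red} premises bound, at complexity $\alpha-1$, both each reduct $a_i(\emptyset)\cdot_{d_i-1}(n-1)[\dots]$ and the demoted agent $(n-1)[\dots]$. The new feature compared to the null case is that $a'$ now has an occurrence of $x$ of type $d$, so after substitution there is an extra reduction available: the rule firing along the edge $\{d\}x$, whose reduct (when $d>0$, i.e. $d \leq \rho+1$ so the reduct's fresh cut is admissible) is $b \cdot_{d-1}(n-1)[\dots, \{d\}b]$. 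For the reducts inherited from the original tree I apply the induction hypothesis to get them at $(\alpha-1)(\beta+1)$; for the genuinely new reduct I invoke the hypothesis $\sststile{\rho}{\beta} b$ together with a \textsc{Cut} (legal since $d-1 \leq \rho$) to bound it, and this is the step that must be absorbed into the extra $(\beta+1)$ rather than merely $+1$.

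The delicate bookkeeping is to assemble these into a single \textsc{Red} application yielding $\sststile{\rho}{\alpha(\beta+1)} a'(b)$: one checks that $\alpha(\beta+1) = (\alpha-1)(\beta+1) + (\beta+1)$, so the budget $(\beta+1)$ is precisely what pays for one traversal of the substituted $b$ before control returns to $a()$, while the residual $(\alpha-1)(\beta+1)$ covers all further reductions by induction. I would close the argument by \textsc{Monotonicity} (Lemma \ref{monotonicity}) wherever the bounds coming from different branches need to be aligned to a common value before recomposing, and note that the case distinction on whether $x$ is a child of the root versus buried deeper is subsumed, as in the null lemma, by the strengthened quantification over $a'()$.
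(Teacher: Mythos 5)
Your proposal is correct and takes essentially the same approach as the paper: induction on the witness tree, strengthened to all context-agents $a'()$ with $a(\emptyset) = a'(\emptyset)$, with the \textsc{Red} case handled exactly as in the paper's proof --- inherited reducts at $(\alpha-1)(\beta+1)$ by the induction hypothesis, the genuinely new reduct $b \cdot_{d-1} (n-1)[\dots]$ bounded by a \textsc{Cut} (legal since $d-1 \leq \rho$) at $(\alpha-1)(\beta+1)+\beta$, and \textsc{Red} closing the account at $(\alpha-1)(\beta+1)+\beta+1 = \alpha(\beta+1)$. The only point to make explicit in the \textsc{Cut} case is that a root occurrence of $x$ must be folded into the branch containing the root, i.e.\ the induction hypothesis is applied to the context-agent $a_1 \cdot_d x$ rather than to $a_1$ followed by a further \textsc{Cut} on $b$ (which would cost an extra $\beta$, or be outright illegal when $d = \rho+1$); this is precisely what your appeal to the strengthened quantification amounts to, and it is how the paper argues.
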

\begin{proof}
We prove by induction on the tree witness for $\sststile{\rho}{\alpha} a(\emptyset)$ that the above property is true for all context-arena $a'$ such that $a(\emptyset) = a'(\emptyset)$.
\begin{itemize}
\item \textsc{Base.} The root of $a$ is $0$, hence the result is trivial.
\item \textsc{Red.} Suppose $a'$ has the form $n[\{d_1\}a_1, \dots, \{d_p\}a_p, \{d\}x]$, where $a_1, \dots, a_p$ possibly include occurrences of $x$ (the case where $x$ appears as a son
of the root encompasses the other). The premises of \textsc{Red} are then that for $1 \leq i \leq p$ such that
$d_i\geq 1$, $\sststile{\rho}{\alpha-1} a_i(\emptyset) \cdot_{d_i-1} (n-1)[\{d_1\}a_1(\emptyset), \dots \{d_p\}a_p(\emptyset)]$
and $\sststile{\rho}{\alpha-1} (n-1)[\{d_1\}a_1(\emptyset), \dots \{d_p\}a_p(\emptyset)]$. The induction hypothesis on these premises give witnesses for the two following properties:
\begin{eqnarray}
\sststile{\rho}{(\alpha-1)(\beta+1)} (a_i \cdot_{d_i-1} (n-1)[\{d_1\}a_1, \dots, \{d_p\}a_p, \{d\}x])(b)\label{eq1}
\end{eqnarray}
\begin{eqnarray}
\sststile{\rho}{(\alpha-1)(\beta+1)} ((n-1)[\{d_1\}a_1, \dots \{d_p\}a_p, \{d\}x])(b)\label{eq2}
\end{eqnarray}
By hypothesis we have $\sststile{\rho}{\beta} b$, hence by \textsc{Cut} (since $d-1\leq \rho$), we have:
\begin{eqnarray}
\sststile{\rho}{(\alpha-1)(\beta+1) + \beta} b \cdot_{d-1} (n-1)[\{d_1\}a_1(b), \dots \{d_p\}a_p(b), \{d\}b]\label{eq3}
\end{eqnarray}
Using (\ref{eq1}) for all $i\in \{1, \dots, p\}$, (\ref{eq2}) (adjusted to $\sststile{\rho}{(\alpha-1)(\beta+1)+\beta}$ by Lemma \ref{monotonicity}) and (\ref{eq3}) we deduce by \textsc{Red} that
\[
\sststile{\rho}{(\alpha-1)(\beta+1)+ \beta + 1} n[\{d_1\}a_1(b), \dots, \{d_p\}a_p(b), \{d\}b]
\]
Which is what was required.
\item \textsc{Cut.} Let us suppose $\sststile{\rho}{\alpha+\gamma} a(\emptyset)$ is obtained by \textsc{Cut}, hence $a(\emptyset)$ has the form $a_1(\emptyset) \cdot_{d_1} a_2(\emptyset)$. Let us suppose
that $a'$ has the form $(a_1 \cdot_{d'} a_2) \cdot_d x$, since once again the case where $x$ is a child of the root of $a'$ encompasses the other. The premises of \textsc{Cut} are then
$\sststile{\rho}{\alpha} a_1(\emptyset)$ and $\sststile{\rho}{\gamma} a_2(\emptyset)$, and $d'\leq \rho$. Note now that we also have $(a_1 \cdot_d x)(\emptyset) = a_1(\emptyset)$, therefore
the induction hypothesis on $\sststile{\rho}{\alpha} a_1(\emptyset)$ along with $\sststile{\rho}{\beta} b$ and $d\leq \rho+1$ implies that
$\sststile{\rho}{\alpha(\beta+1)} a_1(b) \cdot_d b$. But by induction hypothesis we also have $\sststile{\rho}{\gamma(\beta+1) }a_2(b)$, hence by \textsc{Cut}:
\[
\sststile{\rho}{\alpha(\beta+1) + \gamma(\beta+1)} (a_1(b) \cdot_d b) \cdot_{d'} a_2(b)
\]
Which was what was required for $(a_1(b) \cdot_{d'} a_2(b)) \cdot_d b$, thus it suffices since trees are considered up to permutation.
\end{itemize}
\end{proof}

\begin{lemma}[Cut elimination lemma]
Suppose $\sststile{\rho+1}{\alpha} a$. Then if $\alpha = 0$, $\sststile{\rho}{0} a$. Otherwise, $\sststile{\rho}{2^{\alpha-1}} a$.
\end{lemma}
\begin{proof}
By induction on the tree witness for $\sststile{\rho+1}{\alpha} a$.
\begin{itemize}
\item \textsc{Base.} Trivial.
\item \textsc{Red.} Suppose $a=n[\{d_1\}a_1, \dots, \{d_p\}a_p]$, the premises of \textsc{Red} are
$\sststile{\rho + 1}{\alpha-1} a_i \cdot_{d_i - 1} (n-1)[\{d_1\}a_1, \dots, \{d_p\}a_p]$ for all $i\in \{1, \dots, p\}$ and $\sststile{\rho+1}{\alpha-1} (n-1)[\{d_1\}a_1, \dots, \{d_p\}a_p]$.
If $\alpha \geq 2$, then it follows by induction hypothesis that $\sststile{\rho}{2^{\alpha-2}} a_i \cdot_{d_i - 1} (n-1)[\{d_1\}a_1, \dots, \{d_p\}a_p]$ and
$\sststile{\rho}{2^{\alpha-2}}(n-1)[\{d_1\}a_1, \dots, \{d_p\}a_p]$, which implies by \textsc{Red} and Lemma \ref{monotonicity} that $\sststile{\rho}{2^{\alpha-1}} a$. If $\alpha = 1$, then
the premises of \textsc{Red} are $\sststile{\rho + 1}{0} a_i \cdot_{d_i - 1} (n-1)[\{d_1\}a_1, \dots, \{d_p\}a_p]$ for all $i\in \{1, \dots, p\}$ and $\sststile{\rho+1}{0} (n-1)[\{d_1\}a_1, \dots, \{d_p\}a_p]$. By induction
hypothesis this is still true with $\rho$ instead of $\rho+1$, thus by \textsc{Red} we have $\sststile{\rho}{1} [\{d_1\}a_1, \dots, \{d_p\}a_p]$ which is what we needed to prove.
\item \textsc{Cut.} Suppose $a= a_1 \cdot_d a_2$, the premises of \textsc{Cut} are $\sststile{\rho+1}{\alpha} a_1$, $\sststile{\rho+1}{\beta} a_2$ and $d\leq \rho+1$. If $\alpha, \beta \geq 1$
then by induction hypothesis it follows that $\sststile{\rho}{2^{\alpha-1}} a_1$ and $\sststile{\rho}{2^{\beta-1}} a_2$, in particular if we define a context-agent $a'_1 = a_1 \cdot_d x$ we have
$\sststile{\rho}{2^{\alpha-1}} a'_1(\emptyset)$, hence by the substitution lemma (since $d\leq \rho+1$) we have
$\sststile{\rho}{2^{\alpha-1} (2^{\beta-1}+1)} a'_1(a_2) = a_1 \cdot_d a_2 = a$, thus $\sststile{\rho}{2^{\alpha+\beta-1}} a$ thanks to Lemma \ref{monotonicity} (since it is always true
than $2^{\alpha+\beta-1} \geq 2^{\alpha-1} (2^{\beta-1}+1)$).
If $\alpha = 0$ then by induction hypothesis we have $\sststile{\rho}{0} a_1$ and $\sststile{\rho}{\beta'} a_2$. We use then the substitution lemma (since $d\leq \rho+1$) to get $\sststile{\rho}{0}(a_1 \cdot_d a_2)$, which
is stronger that what was required whatever was the value of $\beta$. The last remaining case is when $\alpha=1$ and $\beta=0$, then by induction hypothesis $\sststile{\rho}{1} a_1$ and $\sststile{\rho}{0} a_1$,
thus by the substitution lemma we have as required $\sststile{\rho}{1}(a_1 \cdot_d a_2)$.
\end{itemize}
\end{proof}

\begin{lemma}[Recomposition lemma]
Let $a$ be an agent. Then:
\[
\sststile{depth(a)}{max(a)|a|} a
\]
Where $depth(a)$ is the maximal label of an edge in $a$, $max(a)$ is the maximal label of a node and $|a|$ is the number of nodes.
\end{lemma}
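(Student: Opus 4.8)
The plan is to proceed by structural induction on the agent $a$, using \textsc{Cut} to reassemble the tree edge by edge and \textsc{Red} only to pay for node labels. Writing $a = n[\{d_1\}a_1, \dots, \{d_p\}a_p]$, $m = max(a)$ and $N = |a|$, the guiding observation is that the target budget decomposes additively, in exactly the shape consumed by \textsc{Cut}:
\[
mN = m\Bigl(1 + \sum_{i=1}^p |a_i|\Bigr) = m + \sum_{i=1}^p m|a_i|.
\]
So it suffices to charge the budget $m$ to the bare root $n[]$, charge $m|a_i|$ to each subtree $a_i$, and add these up by $p$ applications of \textsc{Cut}.

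First I would prove the auxiliary fact that a single node satisfies $\sststile{0}{n} n[]$, by a side induction on $n$. The base $n = 0$ is \textsc{Base}. For the step, $n[]$ has no children and hence admits no reduction, so the reduction premise of \textsc{Red} holds vacuously and the rule fires from $\sststile{0}{n-1}(n-1)[]$ alone, giving $\sststile{0}{n} n[]$. This is the sole use of \textsc{Red} in the whole argument: it is precisely what pays for the numerical label at a node.

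For the induction proper I would treat all arities $p \geq 0$ uniformly. The induction hypothesis gives $\sststile{depth(a_i)}{max(a_i)|a_i|} a_i$ for each subtree, and the auxiliary fact gives $\sststile{0}{n} n[]$ for the root. I then lift every ingredient to the common parameters $\rho = depth(a)$ and $m = max(a)$ using the monotonicity of $\sststile{\rho}{\alpha}$ in both arguments (immediate by induction on the agent): since $depth(a_i) \leq depth(a)$ and $max(a_i) \leq m$ this yields $\sststile{depth(a)}{m|a_i|} a_i$, and since $n \leq m$ it yields $\sststile{depth(a)}{m} n[]$. Every edge label $d_i$ of $a$ obeys $d_i \leq depth(a) = \rho$, so the side condition of \textsc{Cut} is satisfied; cutting the subtrees onto the root one at a time rebuilds $a$ and accumulates the budget $m + \sum_{i=1}^p m|a_i| = mN$, which is the desired $\sststile{depth(a)}{max(a)|a|} a$. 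When $p = 0$ the chain of cuts is empty and the conclusion is just the auxiliary fact.

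I expect no genuine obstacle here; the work is bookkeeping, and the only points deserving care are the following. One must notice that \textsc{Red} applies vacuously at a leaf, the absence of children trivialising its reduction premise. More importantly --- and slightly against the intuition suggested by the lemma's name --- \textsc{Red} is never applied to a composite tree: the reductions available at the root of $a$ are irrelevant to this derivation, since it ends in \textsc{Cut}, and the job of turning those cuts into honest reduction-accounting is deferred to the cut elimination lemma. It then remains only to keep the monotonicity liftings aligned with the additive split $m + \sum_i m|a_i| = mN$.
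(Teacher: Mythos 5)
Your proposal is correct and follows essentially the same route as the paper: the paper likewise isolates the single-node case as an admissible rule \textsc{Base'} (namely $\sststile{\rho}{\alpha+n} n[]$, proved by induction on $n$ using exactly your observation that \textsc{Red} fires vacuously at a leaf), and then rebuilds $a$ by an immediate structural induction using only \textsc{Cut} and the monotonicity lemma. The only cosmetic difference is that the paper builds the slack $\alpha$ and the parameter $\rho$ directly into \textsc{Base'}, whereas you prove $\sststile{0}{n} n[]$ and recover the needed generality afterwards via monotonicity.
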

\begin{proof}
First, let us show that the following rule \textsc{Base'} is admissible, for any $\alpha$ and $\rho$.
\[
\sststile{\rho}{\alpha + n} n[]
\]
If $n=0$ this is exactly \textsc{Base}. Otherwise we apply \textsc{Red}. There is no possible reduction, so the only thing we have to prove is $\sststile{\rho}{\alpha + n - 1} (n-1)[]$, which
is provided by the induction hypothesis. Then we prove the lemma by immediate induction on $a$, using only \textsc{Base'}, \textsc{Cut} and Lemma \ref{monotonicity}.
\end{proof}

From now on, let $N(a)$ denote the longest reduction sequence of $a$. We also use the notations $2_0^n = n$ and $2_{d+1}^n = 2^{{2_d^n}}$ for iterated exponentials.

\begin{lemma}[Bound lemma]
Let $a$ be an agent, then if $\sststile{0}{\alpha} a$, $N(a) \leq \alpha$.
\label{bound}
\end{lemma}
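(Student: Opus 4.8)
The plan is to prove the Bound Lemma by induction on the structure of the witness tree for $\sststile{0}{\alpha} a$, establishing that $N(a) \leq \alpha$. The key observation enabling this is that the predicate $\sststile{\rho}{\alpha}$ is specifically designed to track reduction length when $\rho = 0$: with $\rho = 0$, the \textsc{Cut} rule can only be applied when $d \leq 0$, i.e.\ $d = 0$, so cuts attach a subtree by an edge of label $0$. Since a reduction step $n[\{d_1\}a_1, \dots, \{d_p\}a_p] \leadsto a_i \cdot_{d_i-1} (n-1)[\dots]$ requires $d_i > 0$, no reduction can ever ``descend'' into a subtree attached by a $0$-labelled edge, nor can the root ever fire along such an edge. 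This is exactly the intuition in the lemma's proof sketch: the \textsc{Cut} rules with $\rho = 0$ add no possible reduction and are easily eliminated.

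First I would make precise the claim that $\rho = 0$ cuts are inert. Concretely, if $\sststile{0}{\alpha} a \cdot_0 b$ is derived by \textsc{Cut} from $\sststile{0}{\alpha_1} a$ and $\sststile{0}{\alpha_2} b$ with $\alpha = \alpha_1 + \alpha_2$, I would argue that any reduction sequence from $a \cdot_0 b$ acts entirely within the $a$ component (the root of $a \cdot_0 b$ is the root of $a$, and the edge to $b$ has label $0$, so $b$ is never activated and never merged into an active node during reduction). Hence $N(a \cdot_0 b) = N(a) \leq \alpha_1 \leq \alpha_1 + \alpha_2 = \alpha$ by the induction hypothesis applied to the first premise. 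This handles the \textsc{Cut} case and justifies restricting attention to \textsc{Base}- and \textsc{Red}-derived witnesses.

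Next I would treat the two remaining rule cases. For \textsc{Base}, the witness concludes $\sststile{0}{\alpha} 0[\{d_1\}a_1, \dots]$ with a root labelled $0$; since a reduction step requires the root label $n > 0$, no reduction is possible, so $N(a) = 0 \leq \alpha$ trivially. For \textsc{Red} with $a = n[\{d_1\}a_1, \dots, \{d_p\}a_p]$ and conclusion $\sststile{0}{\alpha+1} a$, the premises give $\sststile{0}{\alpha} a'$ for every $a'$ with $a \leadsto a'$, together with $\sststile{0}{\alpha}(n-1)[\{d_1\}a_1, \dots, \{d_p\}a_p]$. By the induction hypothesis, every one-step reduct $a'$ satisfies $N(a') \leq \alpha$. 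Since any maximal reduction sequence from $a$ begins with one such step $a \leadsto a'$ and then continues as a reduction from $a'$, we get $N(a) \leq 1 + \max_{a \leadsto a'} N(a') \leq 1 + \alpha = \alpha + 1$, as required.

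The main obstacle I anticipate is not any of the three cases individually but rather making the inertness-of-$0$-cuts argument fully rigorous, since it rests on a structural invariant of the reduction relation: that the set of edges with label $0$ is never traversed and that subtrees hanging off such edges are carried along passively through every \textsc{Red} step. I would discharge this by a short auxiliary observation that in the rewriting rule $n[\dots] \leadsto a_i \cdot_{d_i-1}(n-1)[\dots]$, a subtree attached to the root by a $0$-edge is simply copied unchanged into the new root's children (still by a $0$-edge), so it can never become the head of a redex and never contributes to the label decrements that drive reduction. Once this invariant is stated, the decomposition $N(a \cdot_0 b) = N(a)$ and hence the whole induction follow without further difficulty.
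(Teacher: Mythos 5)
Your proof is correct, and the Base and \textsc{Red} cases coincide exactly with the paper's. Where you genuinely diverge is the treatment of \textsc{Cut}. The paper does not argue about the rewriting relation at all at this point: it first transforms the witness for $\sststile{0}{\alpha} a$ into a \textsc{Cut}-free witness by a proof-theoretic induction, eliminating each \textsc{Cut} (necessarily of type $d=0$ when $\rho=0$) with the Null substitution lemma (Lemma~\ref{first_substitution}), which states that substituting any $b$ for a type-$0$ variable preserves the bound $\alpha$ without introducing new \textsc{Cut}s; the main induction then only ever meets \textsc{Base} and \textsc{Red}. You instead keep \textsc{Cut} in the main induction and discharge it semantically, via the combinatorial invariant that $0$-labelled edges are inert for $\leadsto$, so that $N(a \cdot_0 b) = N(a)$. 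Both routes are sound, and in fact your argument is a direct formalisation of the informal sentence in the paper's main-text sketch (``these \textsc{Cut} rules do not add any possible reduction''), whereas the appendix realises that sentence through the derivation system instead. What each buys: the paper's route needs no new facts about the reduction relation and reuses a lemma that is required elsewhere anyway (Lemma~\ref{first_substitution} also underlies the base substitution lemma), so it is cheaper within the overall development; your route is self-contained and even yields the sharper equality $N(a \cdot_0 b) = N(a)$, but the price is exactly the point you flag: the invariant must be stated not just for $b$ hanging off the root, but for subtrees $0$-attached at arbitrary positions, and stable under the duplication performed by $\leadsto$ (when the selected child $a_i$ contains such a subtree, it is copied, and both copies remain $0$-attached). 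With that strengthened statement --- deletion of all $0$-attached marked subtrees commutes step-by-step with reduction, since selectability of a child depends only on its edge label being positive and on the root label --- your induction closes correctly.
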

\begin{proof}
First of all we prove that if there is a witness for $\sststile{0}{\alpha} a$, then it can be supposed \textsc{Cut}-free: this is proved by induction on $\sststile{0}{\alpha} a$, eliminating
each use of \textsc{Cut} by Lemma \ref{first_substitution}. Then, by induction on the \textsc{Cut}-free witness tree for $\sststile{0}{\alpha} a$:
\begin{itemize}
\item \textsc{Base.} Then, the root of $a$ is $0$, thus $N(a) = 0$; there is nothing to prove.
\item \textsc{Red.} The premises of $\sststile{0}{\alpha} a$ include in particular that for all $a'$ such that $a\leadsto a'$, we have $\sststile{0}{\alpha-1} a'$. By induction hypothesis, this
means that for all such $a'$ we have $N(a') \leq \alpha-1$, hence $N(a) \leq \alpha$.
\end{itemize}
\end{proof}

From all this, it is possible to give a first upper bound by using the recomposition lemma, then iterating the cut elimination lemma. However, we will first prove here a refined version of the
cut elimination lemma when $\rho=1$, which will allow to decrease by one the height of the tower of exponentials. First, we need the following adaptation of the substitution lemma:

\begin{lemma}[Base substitution lemma]
If $\sststile{0}{\alpha} a(\emptyset)$, $\sststile{0}{\beta} b$ and the type of $x$ in $a$ is $1$, then $\sststile{0}{\alpha+\beta} a(b)$.
\label{specialized_substitution}
\end{lemma}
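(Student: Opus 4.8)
The plan is to mirror the proof of the Main substitution lemma (Lemma \ref{main_substitution}), but tracking the sharper arithmetic that becomes available at type $1$ and $\rho = 0$. The key observation, already hinted at in the main text, is that once the reduction in $a(b)$ descends into the copy of $b$ substituted for $x$, it can never climb back up into the surrounding context $a()$: an edge of type $1$ allows exactly one ``step down'' via the \textsc{Red} rule before its decorating depth drops to $0$ and blocks all further reductions through that edge. Consequently, the reduction budget for $a(b)$ is additive in the budgets for $a(\emptyset)$ and for $b$, rather than multiplicative as in the general case. So the target bound $\alpha + \beta$ (as opposed to $\alpha(\beta+1)$) is exactly what this no-return phenomenon buys us.

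Concretely, I would prove by induction on the tree witness for $\sststile{0}{\alpha} a(\emptyset)$ that the stated property holds for every context-agent $a'()$ with $a(\emptyset) = a'(\emptyset)$, exactly paralleling the case analysis of Lemma \ref{main_substitution}. The \textsc{Base} case is trivial, the root of $a$ being $0$. In the \textsc{Red} case, with $a'$ of the form $n[\{d_1\}a_1, \dots, \{d_p\}a_p, \{1\}x]$, the induction hypothesis applied to the premises yields witnesses at cost $(\alpha-1)+\beta$ for both the recomposed subterm and the decremented-root term. The crucial point is the edge $\{1\}x$: when \textsc{Red} fires on it, the resulting triple places $b$ in head position with an edge decorated by $1 - 1 = 0$ back to $(n-1)[\dots]$, so no subsequent reduction re-enters the context. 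I would feed $\sststile{0}{\beta} b$ directly into this branch (rather than via a multiplicative \textsc{Cut}), and reassemble via \textsc{Red} to land at $\sststile{0}{(\alpha-1)+\beta+1} = \sststile{0}{\alpha+\beta} a(b)$. The \textsc{Cut} case is handled as before: decompose $a(\emptyset) = a_1(\emptyset) \cdot_{d'} a_2(\emptyset)$, apply the induction hypothesis to the factor containing $x$ using additivity, and recombine with \textsc{Cut}, noting that trees are taken up to permutation so the placement of the $x$-edge is immaterial.

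The main obstacle I anticipate is the bookkeeping in the \textsc{Red} case to confirm that the single $\{1\}x$ edge genuinely decouples the two reductions, so that the additive rather than multiplicative combination is legitimate; this is where the earlier general argument needed \textsc{Cut} with a factor of $(\beta+1)$, and I must verify that here the decremented depth $0$ on the edge really does suppress all re-entry into $a()$, making a plain additive \textsc{Red}/\textsc{Cut} combination sound. A secondary subtlety is ensuring the induction is correctly stated over all context-agents $a'$ sharing the same $a(\emptyset)$, so that the sub-occurrences of $x$ nested inside the $a_i$ are handled uniformly by the hypothesis, just as in Lemma \ref{first_substitution} and Lemma \ref{main_substitution}. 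Once these are checked, the monotonicity lemma (Lemma \ref{monotonicity}) smooths over any cost mismatches between the two \textsc{Red} premises, and the result follows.
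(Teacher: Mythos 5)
Your overall strategy is the paper's: induction on the witness for $\sststile{0}{\alpha} a(\emptyset)$, generalized to all context-agents $a'$ with $a'(\emptyset)=a(\emptyset)$, with the additive bound coming from the fact that the $\{1\}x$ edge decrements to $0$ and blocks re-entry. But the step you flag as your ``main obstacle'' is exactly where your plan, as written, breaks down, and the fix is a concrete tool you never deploy there. In the \textsc{Red} case, the premise you must supply for the reduction firing on the $\{1\}b$ edge is a witness for $b \cdot_0 (n-1)[\{d_1\}a_1(b),\dots,\{d_p\}a_p(b),\{1\}b]$ at cost $\alpha-1+\beta$. A ``plain additive \textsc{Red}/\textsc{Cut} combination'' cannot produce this: \textsc{Cut} (with $d=0\leq\rho=0$) would combine $\sststile{0}{\beta} b$ with the induction-hypothesis witness $\sststile{0}{\alpha-1+\beta}$ for the tail $(n-1)[\dots,\{1\}b]$, giving $\alpha-1+2\beta$ --- the copy of $b$ sitting inside the tail is paid for a second time. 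Propagated through the induction this overshoot compounds: the recursion becomes $f(\alpha)=f(\alpha-1)+\beta+1$, i.e.\ you recover only the multiplicative bound $\alpha(\beta+1)$ of Lemma \ref{main_substitution}, which defeats the purpose of the lemma. What the paper does instead is attach the tail at \emph{zero} cost: since the edge has label $0$, the tail is unreachable by any reduction, and this is precisely the content of Lemma \ref{first_substitution} (null substitution at type $0$) applied to the context $b\cdot_0 y$. One gets $\sststile{0}{\beta} b\cdot_0(n-1)[\dots,\{1\}b]$ outright, then lifts to $\alpha-1+\beta$ by Lemma \ref{monotonicity}. So ``feed $b$ directly'' is the right instinct, but its formalization is the null substitution lemma, not any use of \textsc{Cut}.

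The same gap recurs in your \textsc{Cut} case. ``Apply the induction hypothesis to the factor containing $x$ and recombine with \textsc{Cut}'' only works when $x$ occurs in a single factor; in general $x$ occurs in both $a_1$ and $a_2$ (the induction quantifies over arbitrary placements of $x$), and applying the induction hypothesis to both factors and recombining with \textsc{Cut} costs $\alpha+\gamma+2\beta$ rather than the required $(\alpha+\gamma)+\beta$. The paper's resolution is again Lemma \ref{first_substitution}: apply the induction hypothesis to $a_1$ only, obtaining $\sststile{0}{\alpha+\beta} a_1(b)\cdot_1 b$, and then substitute $a_2(b)$ into the type-$0$ position at no cost --- sound because $a_2(b)$ hangs behind a $0$-labelled edge and is never entered. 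In short: your induction scheme and your no-re-entry intuition are correct and match the paper, but the accounting only closes once Lemma \ref{first_substitution} is used as the engine at both crucial points, and your proposal leaves that piece unresolved.
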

\begin{proof}
We prove by induction on the tree witness for $\sststile{0}{\alpha} a(\emptyset)$ that the above property is true for all context-arena $a'$ such that $a(\emptyset) = a'(\emptyset)$.
\begin{itemize}
\item \textsc{Base.} The root of $a$ is $0$, hence the result is trivial.
\item \textsc{Red.} Suppose $a'$ has the form $n[\{d_1\}a_1, \dots, \{d_p\}a_p, \{d\}x]$, where $a_1, \dots, a_p$ possibly include occurrences of $x$ (the case where $x$ appears as a son
of the root encompasses the other). The premises of \textsc{Red} are then that for $1 \leq i \leq p$ such that
$d_i\geq 1$, $\sststile{0}{\alpha-1} a_i(\emptyset) \cdot_{d_i-1} (n-1)[\{d_1\}a_1(\emptyset), \dots \{d_p\}a_p(\emptyset)]$
and $\sststile{0}{\alpha-1} (n-1)[\{d_1\}a_1(\emptyset), \dots \{d_p\}a_p(\emptyset)]$. The induction hypothesis on these premises give witnesses for the two following properties:
\begin{eqnarray}
\sststile{0}{\alpha-1+\beta} (a_i \cdot_{d_i-1} (n-1)[\{d_1\}a_1, \dots, \{d_p\}a_p, \{d\}x])(b)\label{eq4}
\end{eqnarray}
\begin{eqnarray}
\sststile{0}{\alpha-1 +\beta} ((n-1)[\{d_1\}a_1, \dots \{d_p\}a_p, \{d\}x])(b)\label{eq5}
\end{eqnarray}
By hypothesis we have $\sststile{0}{\beta} b$, hence by Lemma \ref{first_substitution} (since $d=1$) we have
\[
\sststile{0}{\beta} b \cdot_{d-1} (n-1)[\{d_1\}a_1(b), \dots \{d_p\}a_p(b), \{d\}b]\label{eq6}
\]
Hence, using (\ref{eq4}) for all $i\in \{1, \dots, p\}$, (\ref{eq5}) and (\ref{eq6}) (adjusted to $\sststile{0}{\alpha-1+\beta} b$ by Lemma \ref{monotonicity}) we deduce by \textsc{Red} that
\[
\sststile{0}{\alpha + \beta} n[\{d_1\}a_1(b), \dots, \{d_p\}a_p(b), \{d\}b]
\]
Which is what was required.
\item \textsc{Cut.} Let us suppose $\sststile{0}{\alpha+\gamma} a(\emptyset)$ is obtained by \textsc{Cut}, hence $a(\emptyset)$ has the form $a_1(\emptyset) \cdot_0 a_2(\emptyset)$. Let us suppose
that $a'$ has the form $(a_1 \cdot_0 a_2) \cdot_d x$. The premises of \textsc{Cut} are then
$\sststile{0}{\alpha} a_1(\emptyset)$ and $\sststile{0}{\gamma} a_2(\emptyset)$. Note now that we also have $(a_1 \cdot_d x)(\emptyset) = a_1(\emptyset)$, therefore
the induction hypothesis on $\sststile{\rho}{\alpha} a_1(\emptyset)$ along with $\sststile{0}{\beta} b$ implies that
$\sststile{0}{\alpha+\beta} a_1(b) \cdot_1 b$ and all that remains is to substitute $a_2(b)$ in $(a_1(b) \cdot_1 b) \cdot_0 x$. But since the type of $x$ is $0$, Lemma \ref{first_substitution} proves
that $\sststile{0}{\alpha+\beta} (a_1(b) \cdot_d b) \cdot_{d'} a_2(b)$, which concludes since trees are considered up to permutation.
\end{itemize}
\end{proof}

\begin{lemma}[Base cut elimination lemma]
If $\sststile{1}{\alpha} a$, then $\sststile{0}{\alpha} a$.
\label{specialized_cut_elim}
\end{lemma}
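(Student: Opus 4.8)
The plan is to proceed by induction on the tree witness for $\sststile{1}{\alpha} a$, mirroring the structure of the general cut elimination lemma but exploiting the additive bound of the base substitution lemma (Lemma \ref{specialized_substitution}) in place of the multiplicative bound that forces the exponential blowup in the general case. The \textsc{Base} case is immediate: if the root of $a$ is $0$ then $\sststile{0}{\alpha} a$ holds again by \textsc{Base}.

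For the \textsc{Red} case, suppose $a = n[\{d_1\}a_1, \dots, \{d_p\}a_p]$, so that the premises are $\sststile{1}{\alpha-1} a'$ for every $a'$ with $a \leadsto a'$, together with $\sststile{1}{\alpha-1} (n-1)[\{d_1\}a_1, \dots, \{d_p\}a_p]$. Applying the induction hypothesis to each of these premises replaces $\rho = 1$ by $\rho = 0$ uniformly, after which \textsc{Red} reassembles $\sststile{0}{\alpha} a$. No arithmetic change occurs here, which is precisely why the height of the tower of exponentials will not grow.

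The only delicate case is \textsc{Cut}, where $a = a_1 \cdot_d a_2$ with premises $\sststile{1}{\beta} a_1$ and $\sststile{1}{\gamma} a_2$, total index $\alpha = \beta + \gamma$, and side condition $d \leq 1$. The induction hypothesis yields $\sststile{0}{\beta} a_1$ and $\sststile{0}{\gamma} a_2$, and I would split on the value of $d$. If $d = 0$, then $d \leq 0 = \rho$, so \textsc{Cut} applies directly at $\rho = 0$ and gives $\sststile{0}{\beta + \gamma} a$. If $d = 1$, then \textsc{Cut} is no longer available at $\rho = 0$; instead I introduce the context-agent $a_1' = a_1 \cdot_1 x$, whose hole $x$ has type $1$ and which satisfies $a_1'(\emptyset) = a_1$. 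Feeding $\sststile{0}{\beta} a_1'(\emptyset)$ and $\sststile{0}{\gamma} a_2$ into the base substitution lemma produces $\sststile{0}{\beta + \gamma} a_1'(a_2) = a_1 \cdot_1 a_2 = a$, as required.

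The main obstacle is this $d = 1$ subcase, and it has in fact already been absorbed into the preceding base substitution lemma: the crucial observation, noted informally before its statement, is that once reduction descends into the argument across a type-$1$ edge it can never climb back into the context, so the two lengths add rather than multiply. Granting Lemma \ref{specialized_substitution}, the present lemma is then a routine bookkeeping induction, and it is exactly the ingredient needed to shave one level off the final upper bound in the case $\rho = 0$.
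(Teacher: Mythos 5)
Your proof is correct and follows essentially the same route as the paper's: induction on the witness tree, with the \textsc{Base} and \textsc{Red} cases passing through unchanged and the \textsc{Cut} case split on $d$, using \textsc{Cut} at $\rho=0$ when $d=0$ and Lemma \ref{specialized_substitution} when $d=1$. Your explicit construction of the context-agent $a_1 \cdot_1 x$ merely spells out what the paper leaves implicit in the phrase ``we just apply Lemma \ref{specialized_substitution} instead of \textsc{Cut}.''
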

\begin{proof}
By induction on the witness tree for $\sststile{1}{\alpha} a$.
\begin{itemize}
\item \textsc{Base.} Trivial.
\item \textsc{Red.} Suppose $a$ has the form $n[\{d_1\}a_1, \dots, \{d_p\}a_p]$. The premises of \textsc{Red} are that for all $i\in \{1, \dots, p\}$ we have
$\sststile{1}{\alpha-1} a_1 \cdot_{d_i - 1} (n-1)\{d_1\}a_1, \dots, \{d_p\}a_p]$ and $\sststile{1}{\alpha-1} (n-1)\{d_1\}a_1, \dots, \{d_p\}a_p]$. The result is then
trivial by induction hypothesis and \textsc{Red}.
\item \textsc{Cut.} Suppose $a = a_1 \cdot_d a_2$ with $d\leq 1$, the premises of \textsc{Cut} are that $\sststile{1}{\alpha} a_1$ and $\sststile{1}{\beta} a_2$. If $d=0$, then the result
is trivial by the induction hypothesis and \textsc{Cut}. If $d=1$, we just apply Lemma \ref{specialized_substitution} instead of \textsc{Cut}.
\end{itemize}
\end{proof}

\begin{theorem}[Upper bound]
Let $depth(a)$ denote the highest edge label in $a$, $max(a)$ means the highest node label and $|a|$ means the number of nodes of $a$. Then if $depth(a)\geq 1$ and $max(a) \geq 1$ we have:
\[
N(a) \leq 2_{depth(a)-1}^{max(a)|a|-1}
\]
For the particular case when $a=n[\{d\}p[]]$ and if $d\geq 2$ we have:
\[
N_d(n, p) \leq 2_{d-2}^{n(p+1)}
\]
\end{theorem}
\begin{proof}
Let us first prove the first part. By the recomposition lemma, we have $\sststile{depth(a)}{max(a)|a|} a$. By $depth(a)-1$ iterations of the cut elimination lemma, we have
$\sststile{1}{2_{depth(a)-1}^{max(a)|a|-1}} a$. But then by Lemma \ref{specialized_cut_elim} we also have $\sststile{0}{2_{depth(a)-1}^{max(a)|a|-1}} a$. By Lemma \ref{bound}, this
implies as required that $N(a) \leq 2_{depth(a)-1}^{max(a)|a|-1}$. We turn now to the second part. Obviously, we have $\sststile{d-1}{n} n[]$ and $\sststile{d-1}{p} p[]$. By the
substitution lemma, this implies that $\sststile{d-1}{n(p+1)} n[\{d\}p[]]$. By $d-2$ applications of the cut elimination lemma, and one application of Lemma \ref{specialized_cut_elim}, this means
that $\sststile{0}{2_{d-1}^{n(p+1)-1}} [\{d\}p[]]$ hence $N_d(n,p) \leq 2_{d-2}^{n(p+1)-1}$.
\end{proof}

\end{document}